\newtheorem{Theorem}{Theorem}
\newtheorem{Corollary}{Corollary}
\newtheorem{Definition}{Definition}
\newtheorem{Proposition}{Proposition}
\newtheorem{Lemma}{Lemma}
\newcommand{\Id}{{\mathrm{I_d}}}
\newcommand{\dom}{{\mathrm{dom}~}}
\newcommand{\R}{\ensuremath{{\mathrm{I\!R}}}}
\newcommand{\N}{\ensuremath{{\mathrm{I\!N}}}}
\newcommand{\Q}{\ensuremath{{\mathrm{Q\hspace{-2.1mm}\rule{0.3mm}{2.6mm}\;}}}}
\newcommand{\C}{\ensuremath{{\mathrm{C\hspace{-1.7mm}\rule{0.3mm}{2.6mm}\;}}}}
\newcommand{\Z}{\ensuremath{{\mathsf{Z\!\!Z}}}}
\begin{document}

\title[Almost Periodic Orbits and Stability]{Almost Periodic Orbits and
Stability for Quantum Time-Dependent Hamiltonians}
\author{C\'{e}sar R. de Oliveira}
\thanks{CRdeO was partially supported by CNPq (Brazil).}
\address{Departamento de Matem\'{a}tica -- UFSCar, S\~{a}o Carlos, SP,
13560-970 Brazil} \email{oliveira@dm.ufscar.br}
\author{Mariza S. Simsen}
\thanks{MSS was supported by CAPES (Brazil).}
\address{Departamento de Matem\'{a}tica -- UFSCar, S\~{a}o Carlos, SP,
13560-970 Brazil\\} \email{mariza@dm.ufscar.br}
\subjclass{81Q10(11B,47B99)}

\begin{abstract} We study almost periodic orbits of quantum systems and prove that for
periodic time-dependent Hamiltonians an orbit is almost periodic if, and only if, it is
precompact. In the case of quasiperiodic time-dependence
 we present an example of a precompact orbit that is not almost periodic.
Finally we discuss some simple conditions assuring dynamical stability for
nonautonomous quantum system.
\end{abstract}
\maketitle

\noindent {\sc Keywords}: almost periodicity; quantum stability; time-dependent systems; precompact orbits.

\section{Introduction}\label{IntroductionSection} The time
evolution of a quantum mechanical system with time-dependent
Hamiltonians $H(t)$ is determined by the Schr\"{o}dinger equation
\[i\frac{d\psi(t)}{dt}=H(t)\psi(t),\] where $H(t)$ is a family of
self-adjoint operators in the Hilbert space $\mathcal{H}$ and
$\psi(t)\in\mathcal{H}$ for all $t\in\R$. The initial value problem
$\psi(0)=\psi$ has a unique solution
\[\psi(t)\doteq U(t,0)\psi,\] under suitable conditions on $H(t)$
(see \cite{RS,K,K1,I}) and the propagators, or time evolution
operators $U(t,s)$, form a strongly continuous family of unitary
operators acting on $\mathcal{H}$, such that
\[U(t,r)U(r,s)=U(t,s),\qquad \forall r,s,t,\in\R\] \[U(t,t)=
\Id,\qquad \forall t.\] $\Id$ denotes the identity operator. If the Hamiltonian is time-periodic
with period $T$, then $U(t+T,r+T)=U(t,r)$ and the Floquet operator at
$s$ is defined by $U_F(s)\doteq U(s+T,s)$; $U_F(0)$ is simply called 
Floquet operator and denoted by $U_F$, and $U_F(s)$ is unitarily equivalent to $U_F(r)$, $\forall
r,s$. Let
\[
\mathcal{O}(\psi)\doteq\{ U(t,0)\psi:t\in\R\}
\] be
the orbit of a vector $\psi\in\mathcal{H}$.

If $H(t)=H$ is independent of $t$ the time evolution operators are
$U(t,s)=e^{-iH(t-s)}$. In this case, it is a well-known fact that
if $\psi$ is in the point subspace of $H$ then the quantum time
evolution of the state $\psi$, $\psi(t)$, is almost periodic,
since it can be expanded in terms of the eigenfunctions
$\varphi_n$ of $H$, with eigenvalues $E_n$,
\[\psi(t)=\sum_nc_ne^{-iE_nt}\varphi_n.\] Reciprocally, if $\psi(t)$
is almost periodic then using the
results in~\cite{Kat} (Chapter VI) it holds true that $\mathcal{O}(\psi)$ is  precompact and then
$\psi$ is in the point subspace of $H$ (see Theorem~\ref{teo.31} ahead). In this work, we prove that
this fact remains true in the periodic case, that is, $\psi$ is in the point subspace of $U_F$ if,
and only if, $\psi(t)$ is almost periodic (see Theorem~\ref{teo.31a}).

In the studies of time-dependent systems it is common to consider the
quasienergy operator, i.e., a self-adjoint operator formally given
by \[K=-i\frac{d}{dt}+H(t)\] acting in some enlarged Hilbert
space. The quasienergy operator $K$ was previously defined for
periodic Hamiltonians~\cite{Y,H} and then generalized for general
time dependence in~\cite{H1}. In the periodic case it was proved
that \[e^{-iKT}\simeq \Id\otimes U_F,\] where $\simeq$ means
unitary equivalence.

A natural framework for considering general time-dependent
perturbations, which includes both periodic and the random
potentials as special cases, is to write $H(t)$ in the form
\[H(t)=H(g_t(\theta))=H_0+V(g_t(\theta)),\] where $g_t:\Omega\rightarrow\Omega$ is
an invertible flow on a compact manifold $\Omega$ with a probability ergodic
 measure $\mu$ and $H_0$ is the Hamiltonian of the
isolated system (see~\cite{JL,BJL}). Again, under suitable
conditions on $V$ there exists a unitary time evolution operator
$U_{\theta}(t,s)$ and the generalized quasienergy operator is
defined~\cite{JL} on $L^2(\Omega,\mathcal{H},d\mu)$ by
\[(e^{-i\tilde{K}t}f)_{\theta}=\mathcal{F}_{-t}U_{\theta}(t,0)f_{\theta}=
U_{\theta}(0,-t)\mathcal{F}_{-t}f_{\theta},\] where
$\mathcal{F}_{-t}f_{\theta}=f_{g_{-t}(\theta)};$ we refer to this construction as {\it
Jauslin-Lebowitz formulation.} The operator
$\tilde{K}$ acts as
\[(\tilde{K}f)_{\theta}=i\frac{d}{dt}f_{g_{-t}(\theta)}\Big|_{t=0}+H_{\theta}f_{\theta}.\]
In the case of a periodic potential one has $\Omega=S^1\equiv[0,2\pi)$,
$g_t(\theta)=\theta+\omega t$ and $d\mu=\frac{d\theta}{2\pi}$.

For quasiperiodic potentials with two incommensurate
frequencies $\omega_1/\omega_2\notin\Q$ the manifold $\Omega$ is
$S^1\times S^1$,
$g_t(\theta_1,\theta_2)=(\theta_1+\omega_1t,\theta_2+\omega_2t)$
and $d\mu=\frac{d\theta_1}{2\pi}\frac{d\theta_2}{2\pi}$. We denote
the two periods by $T_j=\frac{2\pi}{\omega_j}$. In this case the
generalized Floquet operator acting on $\mathcal{K}_1\doteq
L^2(S^1,\mathcal{H},\frac{d\theta_1}{2\pi})$ is defined by
\begin{equation}\label{FloquetGenEq}
U_{\mathrm{F}}=\mathcal{T}_{-T_2}u_1,
\end{equation} where $u_1(\theta_1)=U_{(\theta_1,0)}(T_2,0)$
($\doteq$ monodromy operator) and
$(\mathcal{T}_{-T_2}\phi)(\theta_1)=\phi(\theta_1-\omega_1T_2)$.

Let $A:\dom A\subset\mathcal{H}\rightarrow\mathcal{H}$ be an
unbounded positive self-adjoint operator with discrete spectrum
which we call a {\it probe operator}. Assuming that if $\psi\in\dom
A$, then $U(t,0)\psi\in\dom A$ for all $t\geq0$, a very
interesting question is about the behavior of the expectation
value of $A$, that is, 
\[
E_{\psi}^{A}(t)\equiv\langle
U(t,0)\psi,AU(t,0)\psi\rangle.
\] We say the system is
$A$-dynamically stable if $E_{\psi}^{A}(t)$ is a bounded function
of time, and $A$-dynamically unstable otherwise. A particular case
is when the Hamiltonian has the form $H(t)=H_0+V(t)$ and $A=H_0$.
In this work we discuss some simple conditions assuring dynamical
stability, mainly when either the Floquet or quasienergy operator
has purely point spectrum; recall that in the periodic case it is
known  that continuous spectrum of the Floquet operator implies
dynamical instability (see Section~\ref{PreliminarSection}).

Usually it is not a simple task to get results on dynamical (in)stability  in the
original Hilbert space $H$ through properties of $K$ or
$\tilde{K}$ acting in the corresponding enlarged space. We present
some theoretical results about this point in
Section~\ref{BoundedSection}. An important result in the periodic case was proved
in~\cite{DSSV}, i.e., that the applicability of the KAM method for the
quasienergy operator $K$, which is a technique to find out a unitary
operator $U$ such that $UKU^{-1}=D,$ where $D$ is pure point, gives a
uniform bound at the expectation value of the energy for a class
of time-periodic Hamiltonians of the form $H(t)=H_0+V(t)$
considered in~\cite{DLSV}. 

The study of precompacity (and related properties) of orbits of a time-dependent quantum system and
their connection with  spectral type and stability  was carried out, e.g.,
in~\cite{EV,dOT,dO,BJLPN,JL,BJL}. In this work we prove that in the periodic case (including the
autonomous case) the orbit
$\mathcal{O}(\psi)$ is precompact if, and only if, $\psi(t)$ is an almost periodic
function. Moreover, already in the quasiperiodic case we present an example with precompact
orbits which are not almost periodic.

This paper is organized as follows. In
Section~\ref{PreliminarSection} we recall some subspaces of
$\mathcal{H}$ that were studied in the literature and the results
that connect this subspaces with dynamical (in)stability  and
spectral properties of the Floquet or quasienergy operators. In
Section~\ref{PeneperiodicSection} we present ours results about
almost periodic orbits. In Section~\ref{BoundedSection} we discuss
some simple conditions assuring dynamical stability; we pay special attention to connection between
enlarged spaces and the original quantum Hilbert space. A number of known results are recalled in
the text in order to make it as readable as possible.

\

\section{Preliminaries}\label{PreliminarSection} In this section we present a short account of
suitable subspaces and relations among them, in order to put our results in context.

Consider a time-dependent Hamiltonian $H(t)$ acting in a separable
Hilbert space $\mathcal{H}$, which may be nonperiodic,
and let $U(t,0)$ the corresponding propagators. Denote by
$A:\dom A\subset\mathcal{H}\rightarrow\mathcal{H}$  a probe
operator, such that $\dom A$ is invariant under time evolution $U(t,0)$.
Let $F(A>E)$ be the spectral projection onto the closed space
spanned by the eigenvectors of $A$ corresponding to the
eigenvalues larger than $E\in\R$. The relevant definitions are as
follows~\cite{EV,dOT,dO,BJLPN}.

\

\begin{Definition}\label{def.26} (i) $\mathcal{H}_{pc}\doteq\{\xi\in\mathcal{H}:
\mathcal{O}(\xi)\;\textrm{is precompact in}\;\mathcal{H}\}$.
\newline (ii)
$\mathcal{H}_{\mathrm{f}}\doteq\big\{\xi\in\mathcal{H}:
\lim_{\tau\rightarrow\infty}\frac{1}{\tau}
\int_0^{\tau}\|CU(t,0)\xi\|dt=0\;\textrm{for any compact}$\break$
\textrm{operator}\; C\big\}$.
\newline (iii)
$\mathcal{H}_{\mathrm{be}}\doteq\{0\neq\xi\in\mathcal{H}:
\lim_{E\rightarrow\infty}\sup_{t\in\R}
\|F(A>E)U(t,0)\frac{\xi}{\|\xi\|}\|=0\}\cup\{0\}$. \newline (iv)
$\mathcal{H}_{\mathrm{ue}}\doteq\{0\neq\xi\in\mathcal{H}:
\lim_{E\rightarrow\infty}\sup_{t\in\R}
\|F(A>E)U(t,0)\frac{\xi}{\|\xi\|}\|=1\}\cup\{0\}$. \newline (v)
$\mathcal{S}^{\mathrm{bd}}(A)\doteq\{\xi\in\dom A:\text{the
function}\;t\mapsto E_{\xi}^A(t)\;\textrm{is bounded}\}$.
\newline (vi)
$\mathcal{S}^{\mathrm{un}}(A)\doteq\{\xi\in\dom A:\text{the
function}\;t\mapsto E_{\xi}^A(t)\;\textrm{is unbounded}\}$.
\end{Definition}

Important compact operators are the projections onto finite
subspaces of $\mathcal{H}$, so that the elements of
$\mathcal{H}_{\mathrm{f}}$ are interpreted as the vectors that
under time evolution leave, on average, any finite-dimensional
subspace of $\mathcal{H}$.

Some basic properties of the sets that appeared in the above
definition are summarized ahead. For proofs we refer the reader
to~\cite{dO,dOT,EV,BJLPN}.

\

\begin{Theorem}\label{GeneralTheo} Let $H(t)$ be a time-dependent
Hamiltonian and $A$ as above; then: \newline (a)
$\mathcal{H}_{\mathrm{f}}$ and $\mathcal{H}_{\mathrm{pc}}$ are
closed subspaces of $\mathcal{H}$.
\newline (b)
$\mathcal{H}_{\mathrm{pc}}\perp\mathcal{H}_{\mathrm{f}}$. \newline
(c) $\mathcal{H}_{\mathrm{be}}=\mathcal{H}_{\mathrm{pc}}$ and
$\mathcal{H}_{\mathrm{f}}\subset\mathcal{H}_{\mathrm{ue}}$.
\newline (d) If $\xi\in\dom A$ and
$\xi\notin\mathcal{H}_{\mathrm{pc}}$ then
$\xi\in\mathcal{S}^{\mathrm{un}}(A)$, that is,
$\mathcal{S}^{\mathrm{bd}}(A)\subset\mathcal{H}_{\mathrm{pc}}$. In
particular, $(\dom A\cap\mathcal{H}_{\mathrm{f}})\setminus\{0\}
\subset\mathcal{S}^{\mathrm{un}}(A)$.
\end{Theorem}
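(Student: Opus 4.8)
The plan is to treat the four items in order, leaning throughout on the unitarity of the propagators (so $\|U(t,0)\zeta\|=\|\zeta\|$) and, crucially, on the hypothesis that the probe operator $A$ has discrete spectrum, which makes each projection $F(A\le E)=\Id-F(A>E)$ of \emph{finite rank}, hence compact. For (a) I would first check linearity: for $\mathcal{H}_{\mathrm{pc}}$ this follows from $\overline{\mathcal{O}(\xi+\eta)}\subset\overline{\mathcal{O}(\xi)}+\overline{\mathcal{O}(\eta)}$ together with the compactness of the sum of two compact sets, and for $\mathcal{H}_{\mathrm{f}}$ from the triangle inequality applied under the Cesàro mean $\frac1\tau\int_0^\tau\|C\,U(t,0)\cdot\|\,dt$. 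Closedness in both cases is a total-boundedness argument using $\|U(t,0)(\xi-\xi_n)\|=\|\xi-\xi_n\|$: for $\mathcal{H}_{\mathrm{pc}}$, cover the totally bounded orbit of a nearby $\xi_n$ by finitely many $\varepsilon$-balls and enlarge them to cover $\mathcal{O}(\xi)$; for $\mathcal{H}_{\mathrm{f}}$, bound $\|C\,U(t,0)\xi\|\le\|C\,U(t,0)\xi_n\|+\|C\|\,\|\xi-\xi_n\|$, average, and let $n\to\infty$. For (b), fix $\xi\in\mathcal{H}_{\mathrm{pc}}$, $\eta\in\mathcal{H}_{\mathrm{f}}$ and exploit that $\langle\xi,\eta\rangle=\langle U(t,0)\xi,U(t,0)\eta\rangle$ is constant in $t$. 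Inserting a finite-rank projection $P$, I would split this into $\langle U(t,0)\xi,P\,U(t,0)\eta\rangle+\langle(\Id-P)U(t,0)\xi,U(t,0)\eta\rangle$; precompactness of $\mathcal{O}(\xi)$ lets me pick $P$ with $\sup_t\|(\Id-P)U(t,0)\xi\|<\varepsilon$, while the first term has vanishing Cesàro mean since $P$ is compact and $\eta\in\mathcal{H}_{\mathrm{f}}$. Averaging and letting $\varepsilon\to0$ gives $\langle\xi,\eta\rangle=0$.

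Part (c) is the structural heart. For $\mathcal{H}_{\mathrm{pc}}\subset\mathcal{H}_{\mathrm{be}}$ I would use that $F(A>E)\to0$ strongly as $E\to\infty$ (discreteness of the spectrum of $A$) and that strong convergence of a uniformly bounded family is \emph{uniform} on precompact sets, whence $\sup_t\|F(A>E)U(t,0)\xi\|\to0$. Conversely, if $\xi\in\mathcal{H}_{\mathrm{be}}$, then for each $\varepsilon$ some $E$ makes $\sup_t\|F(A>E)U(t,0)\xi\|<\varepsilon$, so $\mathcal{O}(\xi)$ lies within $\varepsilon$ of the bounded set $\{F(A\le E)U(t,0)\xi:t\}$ inside the finite-dimensional range of $F(A\le E)$; a bounded subset of a finite-dimensional space is totally bounded, so $\mathcal{O}(\xi)$ is totally bounded and $\xi\in\mathcal{H}_{\mathrm{pc}}$. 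For $\mathcal{H}_{\mathrm{f}}\subset\mathcal{H}_{\mathrm{ue}}$, note $\sup_t\big\|F(A>E)U(t,0)\xi/\|\xi\|\big\|$ is nonincreasing in $E$ and bounded by $1$, so its limit exists and is $\le1$; were it some $c<1$, then for suitably large $E$ one would get $\|F(A\le E)U(t,0)\xi\|^2\ge\|\xi\|^2\big(1-(c+\delta)^2\big)>0$ uniformly in $t$, forcing the Cesàro mean of the compact quantity $\|F(A\le E)U(t,0)\xi\|$ to stay bounded away from $0$, contradicting $\xi\in\mathcal{H}_{\mathrm{f}}$.

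Finally, (d) follows from (c) by a Chebyshev estimate: for $\xi\in\mathcal{S}^{\mathrm{bd}}(A)$ with $E_\xi^A(t)\le M$, positivity of $A$ gives
\[
\|F(A>E)U(t,0)\xi\|^2\le\frac1E\int_E^\infty\lambda\,d\|F(A\le\lambda)U(t,0)\xi\|^2\le\frac1E\,\langle U(t,0)\xi,A\,U(t,0)\xi\rangle\le\frac ME,
\]
so $\sup_t\|F(A>E)U(t,0)\xi\|\to0$ and $\xi\in\mathcal{H}_{\mathrm{be}}=\mathcal{H}_{\mathrm{pc}}$; this is exactly $\mathcal{S}^{\mathrm{bd}}(A)\subset\mathcal{H}_{\mathrm{pc}}$, i.e.\ $\xi\in\dom A\setminus\mathcal{H}_{\mathrm{pc}}$ implies $\xi\in\mathcal{S}^{\mathrm{un}}(A)$. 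The ``in particular'' clause then drops out of (b): since $\mathcal{H}_{\mathrm{f}}\perp\mathcal{H}_{\mathrm{pc}}$ forces $\mathcal{H}_{\mathrm{f}}\cap\mathcal{H}_{\mathrm{pc}}=\{0\}$, any nonzero $\xi\in\dom A\cap\mathcal{H}_{\mathrm{f}}$ lies outside $\mathcal{H}_{\mathrm{pc}}$ and hence in $\mathcal{S}^{\mathrm{un}}(A)$. I expect the genuinely delicate point to be the equivalence in (c): making precise that discreteness of the spectrum of $A$ yields finite-rank $F(A\le E)$ and the strong-to-uniform upgrade on precompact orbits, since this is what links the geometry of the orbit to the probe operator and underlies both (c) and (d).
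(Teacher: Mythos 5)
Your proposal is correct, but there is nothing in the paper to compare it against: the paper does not prove Theorem~\ref{GeneralTheo} at all, stating instead ``For proofs we refer the reader to~\cite{dO,dOT,EV,BJLPN}.'' Your argument is a sound, self-contained reconstruction along exactly the lines of that cited literature: linearity and closedness of $\mathcal{H}_{\mathrm{pc}}$ and $\mathcal{H}_{\mathrm{f}}$ via unitarity of $U(t,0)$; orthogonality by inserting a finite-rank orthogonal projection adapted to an $\varepsilon$-net of the precompact orbit and averaging; the equivalence $\mathcal{H}_{\mathrm{be}}=\mathcal{H}_{\mathrm{pc}}$ from the finite rank of $F(A\le E)$ (discreteness of $\sigma(A)$) together with the strong-to-uniform upgrade on precompact sets; and (d) by the Chebyshev estimate $\|F(A>E)U(t,0)\xi\|^2\le E_\xi^A(t)/E$, which uses positivity of $A$. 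The only hypothesis you use tacitly rather than explicitly is the invariance of $\dom A$ under $U(t,0)$, needed so that $E_\xi^A(t)$ is defined for all $t$ and $\dom A$ splits as $\mathcal{S}^{\mathrm{bd}}(A)\cup\mathcal{S}^{\mathrm{un}}(A)$; the paper assumes this in Section~\ref{PreliminarSection}, so it is not a gap, but it deserves a sentence in a written-up version.
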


Note that if the Hamiltonian $H(t)$ has the form
$H(t)=H_0+V(t)$ with $H_0$ an unbounded, positive, self-adjoint
operator with discrete spectrum, then Theorem~\ref{GeneralTheo}(d) holds
true for $A=H_0$.

\

\subsection{Periodic Case}\label{PeriodicSection} If $H(t)$ is periodic
of period $T$ and $U_F=U(T,0)$ is the corresponding Floquet
operator, we denote by $\mathcal{H}_{\mathrm{p}}$ the point
spectral subspace and by $\mathcal{H}_{\mathrm{c}}$ the continuous
subspace of the Floquet operator $U_{\mathrm{F}}$. Recall the important

\

\begin{Theorem}[RAGE]\label{teo.31R} Let
$C:\mathcal{H}\rightarrow\mathcal{H}$ be a compact operator and
$\xi\in\mathcal{H}_{\mathrm{c}}$, then
\[\lim_{\tau\rightarrow\infty}\frac{1}{\tau}\int_0^{\tau}\|CU(t,0)\xi\|dt=0.\]
\end{Theorem}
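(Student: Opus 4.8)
The plan is to reduce the time-dependent, continuous-time Cesàro average to a discrete-time average governed by the Floquet operator $U_F=U(T,0)$, and then to run a Wiener-type (discrete RAGE) argument on the continuous subspace $\mathcal{H}_{\mathrm{c}}$. First I would reduce to a rank-one operator $C=\langle\phi,\cdot\rangle\eta$: since finite-rank operators are norm-dense in the compacts and $\|CU(t,0)\xi\|\le\|C-C_k\|\,\|\xi\|+\|C_kU(t,0)\xi\|$ uniformly in $t$ (by unitarity of $U(t,0)$), an $\varepsilon/3$ argument replaces $C$ by a finite-rank, hence by a rank-one, operator. For such $C$ one has $\|CU(t,0)\xi\|=\|\eta\|\,|\langle\phi,U(t,0)\xi\rangle|$, so it suffices to control the average of $t\mapsto|\langle\phi,U(t,0)\xi\rangle|$.

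The key structural input is the Floquet factorization. Writing $t=nT+s$ with $n\in\N$ and $s\in[0,T)$, periodicity of the propagators gives $U(nT+s,0)=U(s,0)U(nT,0)=U(s,0)U_F^{\,n}$. Hence, taking first $\tau=NT$,
\[
\int_0^{NT}|\langle\phi,U(t,0)\xi\rangle|\,dt
=\int_0^T\sum_{n=0}^{N-1}|\langle\phi_s,U_F^{\,n}\xi\rangle|\,ds,
\]
where $\phi_s\doteq U(s,0)^*\phi$ and $\|\phi_s\|=\|\phi\|$. Thus the problem is transferred to the discrete orbit of $U_F$, the intra-period time $s$ entering only through the test vector $\phi_s$.

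For fixed $s$ I would apply the discrete RAGE mechanism. The numbers $\langle\phi_s,U_F^{\,n}\xi\rangle$ are the Fourier coefficients of the complex spectral measure $\mu_{\phi_s,\xi}$ of $U_F$; since $\xi\in\mathcal{H}_{\mathrm{c}}$ its spectral measure $\mu_\xi$ is continuous, and the estimate $|\mu_{\phi_s,\xi}(B)|^2\le\mu_{\phi_s}(B)\,\mu_\xi(B)$ forces $\mu_{\phi_s,\xi}$ to be atomless for every $s$. Wiener's theorem then yields $\frac1N\sum_{n=0}^{N-1}|\langle\phi_s,U_F^{\,n}\xi\rangle|^2\to0$, and Cauchy--Schwarz in $n$ gives
\[
\frac1N\sum_{n=0}^{N-1}|\langle\phi_s,U_F^{\,n}\xi\rangle|
\le\Big(\frac1N\sum_{n=0}^{N-1}|\langle\phi_s,U_F^{\,n}\xi\rangle|^2\Big)^{1/2}\longrightarrow 0 .
\]

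It remains to integrate over $s$ and to drop the restriction $\tau=NT$. The left-hand average above is dominated by $\|\phi\|\,\|\xi\|$ uniformly in $s$ and $N$, so dominated convergence yields $\frac{1}{NT}\int_0^{NT}|\langle\phi,U(t,0)\xi\rangle|\,dt\to0$; a general $\tau$ differs from the nearest multiple of $T$ by at most one period, whose contribution is $O(T/\tau)\to0$ since the integrand is bounded. I expect the main obstacle to be precisely this passage through the $s$-integral with the $s$-dependent test vector $\phi_s$: one must justify the interchange of limit and integration via the uniform domination together with measurability of $s\mapsto\frac1N\sum_n|\langle\phi_s,U_F^{\,n}\xi\rangle|$, the latter following from strong continuity of $s\mapsto U(s,0)$. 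Everything else is a direct transcription of the autonomous RAGE argument to the Floquet setting.
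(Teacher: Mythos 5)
Your proof is correct, and it is essentially the argument behind the result as the paper presents it: the paper itself gives no proof of Theorem~\ref{teo.31R}, deferring to Enss--Veseli\'c \cite{EV}, whose proof rests on the same two pillars you use, namely the Floquet factorization $U(nT+s,0)=U(s,0)U_F^{\,n}$ reducing the continuous-time average to a discrete average over the orbit of $U_F$, and the discrete Wiener/RAGE mechanism on the atomless measure $\mu_{\phi_s,\xi}$. Your handling of the genuinely time-dependent points --- the $s$-dependent test vector $\phi_s=U(s,0)^*\phi$, the domination $\bigl|\frac1N\sum_{n}\langle\phi_s,U_F^{\,n}\xi\rangle\bigr|\leq\|\phi\|\,\|\xi\|$ justifying the interchange of limit and $s$-integration, and the passage from $\tau=NT$ to general $\tau$ --- is sound, so nothing is missing.
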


\

A detailed proof of Theorem~\ref{teo.31R} can be found in
\cite{EV}; this result was firstly proved for the autonomous case
(see, e.g., \cite{AG}). As a consequence of this theorem it follows that if
$\xi\in\mathcal{H}_{\mathrm{c}}$ then
$\xi\in\mathcal{H}_{\mathrm{f}}$, so by
Theorem~\ref{GeneralTheo}(d) it follows that $\langle
U(t,0)\xi,AU(t,0)\xi\rangle$ is unbounded. Thus, as it is well known, the presence of
continuous spectrum for the Floquet operator is a signature of
 quantum instability. In principle, one would expect that  a Floquet operator with
purely point spectrum would imply quantum stability, however there are
examples with purely point spectrum and dynamically
unstable; see \cite{dRJLS,JSS,dOP} for examples in the autonomous case and
\cite{dOS} for the time-periodic case.

Using the above theorem  and a series of technical lemmas in~\cite{dOT}, one gets

\

\begin{Theorem}\label{teo.31} If the Hamiltonian operator is periodic in time, then
\newline (a) $\mathcal{H}_{\mathrm{p}}=\mathcal{H}_{\mathrm{be}}=
\mathcal{H}_{\mathrm{pc}}$;
\newline (b) $\mathcal{H}_{\mathrm{c}}=\mathcal{H}_{\mathrm{ue}}=\mathcal{H}_{\mathrm{f}}$.
\end{Theorem}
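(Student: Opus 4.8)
The plan is to reduce the four claimed equalities to just two genuinely new inclusions, using the already-established facts of Theorem~\ref{GeneralTheo} as scaffolding. Since part~(c) of that theorem supplies $\mathcal{H}_{\mathrm{be}}=\mathcal{H}_{\mathrm{pc}}$ and $\mathcal{H}_{\mathrm{f}}\subset\mathcal{H}_{\mathrm{ue}}$ for free, and since the Floquet operator $U_{\mathrm{F}}$ is unitary so that $\mathcal{H}=\mathcal{H}_{\mathrm{p}}\oplus\mathcal{H}_{\mathrm{c}}$ orthogonally, the heart of both (a) and (b) is to establish the inclusions $\mathcal{H}_{\mathrm{p}}\subset\mathcal{H}_{\mathrm{pc}}$ and $\mathcal{H}_{\mathrm{c}}\subset\mathcal{H}_{\mathrm{f}}$. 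The second is immediate from the RAGE theorem (Theorem~\ref{teo.31R}), as already remarked in the text: for $\xi\in\mathcal{H}_{\mathrm{c}}$ and any compact $C$ the time average of $\|CU(t,0)\xi\|$ vanishes, which is precisely membership in $\mathcal{H}_{\mathrm{f}}$. So the real work is the first inclusion.

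To prove $\mathcal{H}_{\mathrm{p}}\subset\mathcal{H}_{\mathrm{pc}}$ I would exploit the periodic structure through the Floquet factorization. Writing $t=nT+s$ with $n\in\Z$ and $s\in[0,T)$, the periodicity $U(t+T,r+T)=U(t,r)$ gives $U(t,0)=U(s,0)U_{\mathrm{F}}^{\,n}$. For eigenvectors $U_{\mathrm{F}}\varphi_j=e^{-i\lambda_j}\varphi_j$ and a finite combination $\xi=\sum_{j=1}^{N}c_j\varphi_j$ one then has $U(t,0)\xi=U(s,0)\sum_j c_j e^{-in\lambda_j}\varphi_j$, so the entire orbit $\mathcal{O}(\xi)$ lies in the image of the map $(s,\theta_1,\dots,\theta_N)\mapsto U(s,0)\sum_j c_j e^{i\theta_j}\varphi_j$ defined on the compact set $[0,T]\times[0,2\pi]^{N}$; here the strong continuity of $s\mapsto U(s,0)$, uniform on the relevant compact family of vectors, is what makes this map continuous. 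Hence $\mathcal{O}(\xi)$ is precompact, i.e.\ $\xi\in\mathcal{H}_{\mathrm{pc}}$. Since such finite combinations are dense in $\mathcal{H}_{\mathrm{p}}$ and $\mathcal{H}_{\mathrm{pc}}$ is closed by Theorem~\ref{GeneralTheo}(a), the inclusion $\mathcal{H}_{\mathrm{p}}\subset\mathcal{H}_{\mathrm{pc}}$ follows.

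With both inclusions in hand I would close the equalities by a soft orthogonality argument. We have $\mathcal{H}=\mathcal{H}_{\mathrm{p}}\oplus\mathcal{H}_{\mathrm{c}}$ together with $\mathcal{H}_{\mathrm{p}}\subset\mathcal{H}_{\mathrm{pc}}$, $\mathcal{H}_{\mathrm{c}}\subset\mathcal{H}_{\mathrm{f}}$, and $\mathcal{H}_{\mathrm{pc}}\perp\mathcal{H}_{\mathrm{f}}$ (Theorem~\ref{GeneralTheo}(b)). Given $\xi\in\mathcal{H}_{\mathrm{pc}}$, decompose $\xi=\xi_{\mathrm{p}}+\xi_{\mathrm{c}}$; then $\xi_{\mathrm{c}}=\xi-\xi_{\mathrm{p}}\in\mathcal{H}_{\mathrm{pc}}$ while also $\xi_{\mathrm{c}}\in\mathcal{H}_{\mathrm{c}}\subset\mathcal{H}_{\mathrm{f}}$, and orthogonality forces $\xi_{\mathrm{c}}=0$, so $\xi\in\mathcal{H}_{\mathrm{p}}$; the symmetric argument yields $\mathcal{H}_{\mathrm{f}}\subset\mathcal{H}_{\mathrm{c}}$. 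This gives $\mathcal{H}_{\mathrm{p}}=\mathcal{H}_{\mathrm{pc}}$ and $\mathcal{H}_{\mathrm{c}}=\mathcal{H}_{\mathrm{f}}$, which, combined with $\mathcal{H}_{\mathrm{be}}=\mathcal{H}_{\mathrm{pc}}$, already settles (a) and the outer equality of (b).

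Finally I would upgrade $\mathcal{H}_{\mathrm{f}}\subset\mathcal{H}_{\mathrm{ue}}$ to an equality. For $0\neq\xi\in\mathcal{H}_{\mathrm{ue}}$, decompose $\xi=\xi_{\mathrm{p}}+\xi_{\mathrm{c}}$; since $\xi_{\mathrm{p}}\in\mathcal{H}_{\mathrm{p}}=\mathcal{H}_{\mathrm{be}}$ we have $\sup_t\|F(A>E)U(t,0)\xi_{\mathrm{p}}\|\to 0$ as $E\to\infty$, and because $F(A>E)$ and $U(t,0)$ are contractions the triangle inequality gives $\sup_t\|F(A>E)U(t,0)\xi\|\le\sup_t\|F(A>E)U(t,0)\xi_{\mathrm{p}}\|+\|\xi_{\mathrm{c}}\|$. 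Letting $E\to\infty$ and using that the left side tends to $\|\xi\|$, one obtains $\|\xi\|\le\|\xi_{\mathrm{c}}\|$, whereas orthogonality gives $\|\xi\|^2=\|\xi_{\mathrm{p}}\|^2+\|\xi_{\mathrm{c}}\|^2$; hence $\xi_{\mathrm{p}}=0$ and $\xi\in\mathcal{H}_{\mathrm{c}}=\mathcal{H}_{\mathrm{f}}$. I expect the main obstacle to be the inclusion $\mathcal{H}_{\mathrm{p}}\subset\mathcal{H}_{\mathrm{pc}}$: it is the only step that genuinely invokes periodicity, through the factorization $U(t,0)=U(s,0)U_{\mathrm{F}}^{\,n}$, and the delicate point is justifying the uniform strong continuity that presents the orbit as the continuous image of a compact set; everything afterwards is bookkeeping driven by the orthogonality $\mathcal{H}_{\mathrm{pc}}\perp\mathcal{H}_{\mathrm{f}}$.
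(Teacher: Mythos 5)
Your proof is correct, but it takes a genuinely different route from the paper, for the simple reason that the paper does not prove Theorem~\ref{teo.31} at all: it quotes the result as following from the RAGE theorem together with ``a series of technical lemmas'' in \cite{dOT} (and \cite{EV}), so the actual argument lives in the references. Your proposal reconstructs the theorem using only the scaffolding the paper does supply, and each step checks out: $\mathcal{H}_{\mathrm{c}}\subset\mathcal{H}_{\mathrm{f}}$ is exactly the remark the paper makes after Theorem~\ref{teo.31R}; the inclusion $\mathcal{H}_{\mathrm{p}}\subset\mathcal{H}_{\mathrm{pc}}$ via the factorization $U(nT+s,0)=U(s,0)U_{\mathrm{F}}^{\,n}$ is sound, since the joint continuity of $(s,\psi)\mapsto U(s,0)\psi$ follows from strong continuity plus unitarity (estimate $\|U(s_n,0)\psi_n-U(s,0)\psi\|\leq\|\psi_n-\psi\|+\|(U(s_n,0)-U(s,0))\psi\|$), and density of finite combinations of eigenvectors in $\mathcal{H}_{\mathrm{p}}$ together with closedness of $\mathcal{H}_{\mathrm{pc}}$ (Theorem~\ref{GeneralTheo}(a)) finishes it; the orthogonality bookkeeping with $\mathcal{H}=\mathcal{H}_{\mathrm{p}}\oplus\mathcal{H}_{\mathrm{c}}$ and $\mathcal{H}_{\mathrm{pc}}\perp\mathcal{H}_{\mathrm{f}}$ is clean; and the final upgrade $\mathcal{H}_{\mathrm{ue}}\subset\mathcal{H}_{\mathrm{c}}$ works because the definitions of $\mathcal{H}_{\mathrm{be}}$ and $\mathcal{H}_{\mathrm{ue}}$ are scale-invariant, so the normalization in Definition~\ref{def.26} is harmless. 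It is worth noting that your key computation is precisely the one the paper performs later, in Proposition~\ref{EigenAlmostProp}, where the same factorization is used to prove the \emph{stronger} statement that eigenvector orbits are almost periodic (feeding into Theorem~\ref{teo.31a}); your argument shows the weaker precompactness conclusion already suffices to close Theorem~\ref{teo.31} softly, whereas the paper's deferral to \cite{dOT,EV} buys the equalities wholesale from technical lemmas proved there. In short: the paper cites, you prove; your proof is self-contained modulo Theorem~\ref{GeneralTheo} and RAGE, and it quietly demonstrates that in the periodic case the hard content of Theorem~\ref{teo.31} is just the Floquet factorization plus orthogonality.
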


\

We observe that Theorem~\ref{teo.31} also holds in the autonomous
case $H(t)=H$ and with $\mathcal{H}_{\mathrm{p}}$ and
$\mathcal{H}_{\mathrm{c}}$ denoting, respectively, the point and
continuous subspace of the Hamiltonian $H$.

According to the above-quoted results, for periodic systems we
have
\begin{equation}\label{eq.22}
\mathcal{H}=\mathcal{H}_{\mathrm{pc}}\oplus\mathcal{H}_{\mathrm{f}}.
\end{equation}
In \cite{dO} was presented an example for which relation
(\ref{eq.22}) does not hold for nonperiodic time dependence. It
was defined the ``unusual" subspace $\mathcal{H}_{\mathrm{a}}$ by
the relation
\[\mathcal{H}=\mathcal{H}_{\mathrm{pc}}\oplus\mathcal{H}_{\mathrm{f}}
\oplus\mathcal{H}_{\mathrm{a}},\] and constructed a nonperiodic
Hamiltonian such that $\mathcal{H}=\mathcal{H}_{\mathrm{a}}$. The
example is given by the Floquet operator generated by the kicked Hamiltonian
\[H(t)=p^2+x\sum_{n=1}^{\infty}\epsilon_n\delta(t-n),\qquad
x\in[0,2\pi),\] acting on $\mathcal{H}=L^2(\mathbb{T})$ and
$\epsilon_n\in\{-1,0,1\}$ adequately chosen.  This example
illustrates some possible unusual properties of nonstationary
quantum systems.

\

\subsection{Quasiperiodic Case}\label{QuasiperiodicSection} In this case we have the
generalized Floquet operator $U_{\mathrm{F}}$ as defined in
(\ref{FloquetGenEq}), acting on the enlarged space
$\mathcal{K}_1=L^2(S^1,\mathcal{H},\frac{d\theta_1}{2\pi})$, and
the generalized quasienergy operator $\tilde{K}$ acting in
$L^2(S^1\times
S^1,\mathcal{H},\frac{d\theta_1}{2\pi}\frac{d\theta_2}{2\pi})$. We
denote, respectively, by $\mathcal{K}_{1,p}$ and
$\mathcal{K}_{1,\mathrm{c}}$ the point and continuous subspace of
the generalized Floquet operator~$U_{\mathrm{F}}$.

For each fixed $t$ let the unitary operator
$U(t):\mathcal{K}_1\rightarrow\mathcal{K}_1$ be given by
$(U(t)\psi)(\theta_1)=U_{(\theta_1,0)}(t,0)\psi(\theta_1)$, that
is,
\[
U(t)=\int_{S^1}^{\oplus}U_{(\theta_1,0)}(t,0)\frac{d\theta_1}{2\pi},
\]
and given $\psi\in\mathcal{K}_1$ let
$\tilde{\mathcal{O}}(\psi)=\{U(t)\psi:t\in\R\}$ be the orbit
of $\psi$ in the enlarged space $\mathcal{K}_1$.

Let $A:\dom A\subset\mathcal{K}_1\rightarrow\mathcal{K}_1$ be a probe
operator with $U(t)\dom A\subset\dom A$ and $F(A>E)$ as before.
The relevant definitions are as follows~\cite{JL,BJL,dOT}:

\

\begin{Definition}\label{def.33} (a) $\mathcal{K}_{1,\mathrm{f}}\doteq\big\{\psi\in
\mathcal{K}_1:\lim_{\tau\rightarrow\infty}\frac{1}{\tau}\int_0^{\tau}
\|CU(t)\psi\|_{\mathcal{K}_1}dt=0\;\textrm{for any}$
$\textrm{compact operator}\; C\;\textrm{in}\;\mathcal{K}_1\big\}$.
\newline (b)
$\mathcal{K}_{1,\mathrm{pc}}=\{\psi\in\mathcal{K}_1:\tilde{\mathcal{O}}(\psi)\;
\textrm{is precompact in}\;\mathcal{K}_1\}$. \newline (c)
$\mathcal{K}_{1,\mathrm{be}}\doteq\{0\neq\psi\in\mathcal{K}_1:
\lim_{E\rightarrow\infty}\sup_{t\in\R}
\|F(A>E)U(t)\frac{\psi}{\|\psi\|}\|=0\}\cup\{0\}$. \newline (d)
$\mathcal{K}_{1,\mathrm{ue}}\doteq\{0\neq\psi\in\mathcal{K}_1:
\lim_{E\rightarrow\infty}\sup_{t\in\R}
\|F(A>E)U(t)\frac{\psi}{\|\psi\|}\|=1\}\cup\{0\}$.
\end{Definition}

In~\cite{JL} it was proved the analog of the RAGE Theorem for the
quasiperiodic case. The proof is an adaptation of the similar
statement in the periodic case discussed in~\cite{EV}. As in
the periodic case one has:

\

\begin{Theorem}\label{teo.35} If the Hamiltonian operator is quasiperiodic
in time, then \newline (a)
$\mathcal{K}_{1,p}=\mathcal{K}_{1,\mathrm{pc}}=\mathcal{K}_{1,\mathrm{be}}$;
\newline (b)
$\mathcal{K}_{1,\mathrm{c}}=\mathcal{K}_{1,\mathrm{ue}}=\mathcal{K}_{1,\mathrm{f}}$.
\end{Theorem}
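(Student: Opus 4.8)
The plan is to reproduce the architecture of the periodic argument behind Theorem~\ref{teo.31}, transplanted to the enlarged space $\mathcal{K}_1$. The starting remark is that the family $U(t)=\int_{S^1}^{\oplus}U_{(\theta_1,0)}(t,0)\,\frac{d\theta_1}{2\pi}$ is itself the value at initial time $0$ of a genuine propagator on $\mathcal{K}_1$, namely the fibrewise direct integral $\tilde U(t,s)=\int_{S^1}^{\oplus}U_{(\theta_1,0)}(t,s)\,\frac{d\theta_1}{2\pi}$, which is unitary, strongly continuous, and satisfies the cocycle law. Hence every abstract statement of Section~\ref{PreliminarSection} valid for an arbitrary time-dependent propagator applies verbatim to $U(t)$ on $\mathcal{K}_1$ with the probe operator $A$ there: the analogues of Theorem~\ref{GeneralTheo} yield $\mathcal{K}_{1,\mathrm{be}}=\mathcal{K}_{1,\mathrm{pc}}$, $\mathcal{K}_{1,\mathrm{f}}\subset\mathcal{K}_{1,\mathrm{ue}}$ and $\mathcal{K}_{1,\mathrm{pc}}\perp\mathcal{K}_{1,\mathrm{f}}$, while the quasiperiodic RAGE theorem of~\cite{JL} gives $\mathcal{K}_{1,\mathrm{c}}\subset\mathcal{K}_{1,\mathrm{f}}$. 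With these at hand the whole theorem reduces to a single identification, $\mathcal{K}_{1,\mathrm{pc}}=\mathcal{K}_{1,p}$, that is, matching precompactness of the continuous orbit with membership in the point subspace of the \emph{generalized} Floquet operator $U_{\mathrm{F}}=\mathcal{T}_{-T_2}u_1$.

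The genuinely new ingredient, absent in the periodic case where $U(nT)=U_{\mathrm{F}}^{\,n}$ is immediate, is an intertwining relating $U(t)$ to $U_{\mathrm{F}}$. Combining the cocycle identity $U_{(\theta_1,0)}(t+T_2,0)=U_{(\theta_1,0)}(t+T_2,T_2)\,U_{(\theta_1,0)}(T_2,0)$ with the covariance relation $U_{(\theta_1,0)}(t+T_2,T_2)=U_{g_{T_2}(\theta_1,0)}(t,0)$ and the crucial fact $\omega_2T_2=2\pi$, so that $g_{T_2}(\theta_1,0)=(\theta_1+\omega_1T_2,0)$, one checks directly that $U(t+T_2)=\mathcal{T}_{T_2}\,U(t)\,U_{\mathrm{F}}$, where $\mathcal{T}_{T_2}=\mathcal{T}_{-T_2}^{-1}$. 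Since $U(T_2)=u_1=\mathcal{T}_{T_2}U_{\mathrm{F}}$, an induction that substitutes $U((n-1)T_2)=\mathcal{T}_{(n-1)T_2}U_{\mathrm{F}}^{\,n-1}$ and uses that the shifts form a one-parameter group gives $U(nT_2)=\mathcal{T}_{nT_2}\,U_{\mathrm{F}}^{\,n}$ for all $n\in\Z$, the shift and Floquet factors separating cleanly by associativity (no commutation between $\mathcal{T}$ and $U_{\mathrm{F}}$ is needed).

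With this relation the equivalence $\mathcal{K}_{1,\mathrm{pc}}=\mathcal{K}_{1,p}$ is reached in two steps. First, precompactness of the full orbit $\{U(t)\psi:t\in\R\}$ is equivalent to precompactness of the discrete orbit $\{U(nT_2)\psi:n\in\Z\}$: one direction is trivial, and for the other one interpolates over each interval $[nT_2,(n+1)T_2]$ using the strong continuity, hence equicontinuity on compacts, of $U(\cdot)$ on $[0,T_2]$. Second, because the shift $\mathcal{T}_a$ depends on $a$ only modulo $2\pi/\omega_1$ and $a\mapsto\mathcal{T}_a$ is strongly continuous, the family $\{\mathcal{T}_a\}$ is indexed by a compact circle and therefore maps precompact sets to precompact sets; applied to $U(nT_2)\psi=\mathcal{T}_{nT_2}U_{\mathrm{F}}^{\,n}\psi$ this shows $\{U(nT_2)\psi\}$ is precompact if and only if $\{U_{\mathrm{F}}^{\,n}\psi\}$ is. Finally, the discrete form of the results in~\cite{Kat} (Ch.~VI), i.e. that the orbit of a unitary operator is precompact exactly when the vector lies in its point subspace, identifies precompactness of $\{U_{\mathrm{F}}^{\,n}\psi\}$ with $\psi\in\mathcal{K}_{1,p}$.

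Assembling the pieces gives part (a): $\mathcal{K}_{1,p}=\mathcal{K}_{1,\mathrm{pc}}=\mathcal{K}_{1,\mathrm{be}}$. Part (b) then follows by orthogonal complementation inside $\mathcal{K}_1$: writing $\mathcal{K}_1=\mathcal{K}_{1,p}\oplus\mathcal{K}_{1,\mathrm{c}}$ for the spectral decomposition of $U_{\mathrm{F}}$ and using $\mathcal{K}_{1,\mathrm{pc}}\perp\mathcal{K}_{1,\mathrm{f}}$ gives $\mathcal{K}_{1,\mathrm{f}}\subset\mathcal{K}_{1,\mathrm{c}}$, which together with the RAGE inclusion $\mathcal{K}_{1,\mathrm{c}}\subset\mathcal{K}_{1,\mathrm{f}}$ yields $\mathcal{K}_{1,\mathrm{c}}=\mathcal{K}_{1,\mathrm{f}}$; the inclusion $\mathcal{K}_{1,\mathrm{f}}\subset\mathcal{K}_{1,\mathrm{ue}}$ and a short estimate on the components of $\psi=\psi_p+\psi_c$ (forcing the defining supremum-limit below $1$ whenever $\psi_p\neq0$) close the remaining inclusion $\mathcal{K}_{1,\mathrm{ue}}\subset\mathcal{K}_{1,\mathrm{c}}$, exactly as (b) was deduced in the periodic setting. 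I expect the true obstacle to be precisely the bridge of the middle two paragraphs: verifying the intertwining $U(nT_2)=\mathcal{T}_{nT_2}U_{\mathrm{F}}^{\,n}$ and, above all, controlling the non-compact shift $\mathcal{T}_{nT_2}$ — which simply does not appear in the periodic case — so that it neither creates nor destroys precompactness. The interpolation over $[0,T_2]$ and the compact-circle parametrization of the shifts are the devices that make this control possible.
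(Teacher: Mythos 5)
Your proposal is correct, but there is nothing in the paper to compare it against line by line: Theorem~\ref{teo.35} is stated in the Preliminaries without proof, as a result recalled from~\cite{JL,BJL,dOT}, the text only remarking that the quasiperiodic RAGE analogue was proved in~\cite{JL} and that the rest goes ``as in the periodic case.'' What you have written is precisely the adaptation the paper delegates to those references, and its skeleton is sound: the abstract facts of Theorem~\ref{GeneralTheo} (namely $\mathcal{K}_{1,\mathrm{be}}=\mathcal{K}_{1,\mathrm{pc}}$, $\mathcal{K}_{1,\mathrm{f}}\subset\mathcal{K}_{1,\mathrm{ue}}$, $\mathcal{K}_{1,\mathrm{pc}}\perp\mathcal{K}_{1,\mathrm{f}}$) do apply verbatim to the direct-integral propagator on $\mathcal{K}_1$; the intertwining $U(nT_2)=\mathcal{T}_{nT_2}U_{\mathrm{F}}^{\,n}$ is correct --- it follows, as you say, from the cocycle law, the covariance $U_{(\theta_1,0)}(t+T_2,T_2)=U_{(\theta_1+\omega_1T_2,0)}(t,0)$ (the same identity that produces the product formula for $U_{(\theta_1,0)}(kT_2,0)$ in the paper's Example) and $\omega_2T_2=2\pi$; and the reduction of $\mathcal{K}_{1,\mathrm{pc}}=\mathcal{K}_{1,p}$ to the standard fact that a unitary operator has precompact discrete orbits exactly on its point subspace is the same mechanism that underlies the periodic Theorem~\ref{teo.31} via~\cite{dOT}. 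Part~(b) then closes exactly as in the periodic case.

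One step is phrased imprecisely, although it is repairable with tools you already put on the table. You justify ``full orbit precompact $\Leftrightarrow$ discrete orbit precompact'' by interpolating with $U(s)$, $s\in[0,T_2]$, alone; but unlike the periodic case, $U(nT_2+s)\neq U(s)\,U(nT_2)$ here. The propagator over $[nT_2,(n+1)T_2]$ relative to time $nT_2$ is the shifted conjugate $\mathcal{T}_{nT_2}U(s)\mathcal{T}_{-nT_2}$, so that in fact $U(nT_2+s)=\mathcal{T}_{nT_2}\,U(s)\,U_{\mathrm{F}}^{\,n}$. Hence the interpolation cannot be separated from the shift control: the compact-circle device must already enter at this stage. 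The cleanest repair merges your two steps into one: since $\mathcal{T}_{\beta}$ depends on $\beta$ only modulo $T_1$ and $(\beta,s,\phi)\mapsto\mathcal{T}_{\beta}U(s)\phi$ is jointly continuous, the full orbit $\{\mathcal{T}_{nT_2}U(s)U_{\mathrm{F}}^{\,n}\psi\}$ is contained in the image of the compact set $[0,T_1]\times[0,T_2]\times\overline{\{U_{\mathrm{F}}^{\,n}\psi:n\in\Z\}}$, hence is precompact whenever $\{U_{\mathrm{F}}^{\,n}\psi\}$ is; conversely, restricting to $s=0$ and applying $\mathcal{T}_{-nT_2}$, controlled by the same compactness argument, shows precompactness of the full orbit forces that of $\{U_{\mathrm{F}}^{\,n}\psi\}$. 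With that rewording your argument is complete.
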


\

 It is worth
mentioning that the relation between the energy growth and the
characterizations in Definition~\ref{def.33} is not as direct as
in the case of periodic and autonomous potentials. The above theorem
holds on the enlarged space $\mathcal{K}_1$ so that a generalized
operator with continuous spectrum does not ensure unbounded energy
growth in the original Hilbert space $\mathcal{H}$, although it
does in $\mathcal{K}_1$. See~\cite{JL,BJL} for interesting
examples on systems with time-quasiperiodic dependence.

\

\section{Almost Periodic Orbits}\label{PeneperiodicSection} Let
$\mathcal{B}$ be a Banach space. A continuous function
$f:\R\rightarrow\mathcal{B}$ is called \textit{almost periodic}
if for any number $\epsilon>0$, one can find a number
$l(\epsilon)>0$ such that any interval of the real line of length
$l(\epsilon)$ contains at least one point $\tau$ with the property
that \[\|f(t+\tau)-f(t)\|<\epsilon, \qquad\qquad \forall t\in\R.\]
For properties of almost periodic functions we refer the reader
to~\cite{COR,Kat}. Now we introduce the following subset of
$\mathcal{H}$:
\[\mathcal{H}_{\mathrm{ap}}\doteq\{\xi\in\mathcal{H}:
\textrm{the function}\;\R\ni t\mapsto\xi(t)=U(t,0)\xi\;\textrm{is
almost periodic}\}.\]
By abuse of language sometimes we say that the orbit $\mathcal{O}(\xi)$ is almost periodic.

For general time dependence one has

\

\begin{Proposition}\label{prop.27a} $\mathcal{H}_{\mathrm{ap}}$ is a closed subspace
of $\mathcal{H}$ and
$\mathcal{H}_{\mathrm{ap}}\subset\mathcal{H}_{\mathrm{pc}}$.
\end{Proposition}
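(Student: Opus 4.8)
The plan is to treat the three assertions in turn — that $\mathcal{H}_{\mathrm{ap}}$ is closed under the vector-space operations, that it is topologically closed, and that it sits inside $\mathcal{H}_{\mathrm{pc}}$ — leaning throughout on the classical theory of Banach-space-valued almost periodic functions as in \cite{COR,Kat}. The single tool I would isolate at the outset is Bochner's criterion: a bounded continuous $f:\R\rightarrow\mathcal{B}$ is almost periodic if and only if the family of its translates $\{f(\cdot+\tau):\tau\in\R\}$ is relatively compact in $C_b(\R,\mathcal{B})$ under the supremum norm.

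For the subspace property, closure under scalar multiplication is immediate from the definition: for $c\in\C$, any $\epsilon/(|c|+1)$-almost period of $t\mapsto U(t,0)\xi$ is an $\epsilon$-almost period of $t\mapsto U(t,0)(c\xi)=c\,U(t,0)\xi$. Closure under addition is the one algebraic point with content. Given $\xi,\eta\in\mathcal{H}_{\mathrm{ap}}$ and an arbitrary sequence $(\tau_n)\subset\R$, I would use Bochner's criterion to pass first to a subsequence along which $U(\cdot+\tau_n,0)\xi$ converges uniformly, and then to a further subsequence along which $U(\cdot+\tau_n,0)\eta$ converges uniformly; since $U(\cdot+\tau_n,0)(\xi+\eta)$ is their sum, it too converges uniformly, so the translates of $t\mapsto U(t,0)(\xi+\eta)$ are relatively compact and $\xi+\eta\in\mathcal{H}_{\mathrm{ap}}$. (Equivalently, one quotes that the common $\epsilon$-almost periods of two almost periodic functions form a relatively dense set.)

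For closedness, suppose $\xi_n\in\mathcal{H}_{\mathrm{ap}}$ and $\xi_n\rightarrow\xi$ in $\mathcal{H}$. The unitarity of the propagators is what makes this short: since each $U(t,0)$ is unitary, $\|U(t,0)\xi_n-U(t,0)\xi\|=\|\xi_n-\xi\|$ for every $t$, so
\[
\sup_{t\in\R}\|U(t,0)\xi_n-U(t,0)\xi\|=\|\xi_n-\xi\|\to 0
\]
as $n\rightarrow\infty$; that is, the orbit functions converge uniformly on $\R$. Because a uniform limit of almost periodic functions is again almost periodic (the almost periodic functions form a closed subspace of $C_b(\R,\mathcal{H})$), we get $\xi\in\mathcal{H}_{\mathrm{ap}}$. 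Finally, for the inclusion $\mathcal{H}_{\mathrm{ap}}\subset\mathcal{H}_{\mathrm{pc}}$, I would note that the range of an almost periodic function is relatively compact: it is the image of the relatively compact set of translates (Bochner) under the continuous evaluation-at-$0$ map. For $\xi\in\mathcal{H}_{\mathrm{ap}}$ that range is precisely $\mathcal{O}(\xi)$, which is therefore precompact, giving $\xi\in\mathcal{H}_{\mathrm{pc}}$.

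I do not expect a serious obstacle here, since all three ingredients are standard facts about almost periodic functions. The only step carrying genuine content is the stability of almost periodicity under addition, and even that is either quoted directly from \cite{COR,Kat} or dispatched by the successive-extraction argument above; unitarity of $U(t,0)$ trivializes both the closedness step and the identification of the range with the orbit.
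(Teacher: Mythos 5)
Your proof is correct, and its skeleton coincides with the paper's: vector-space operations, then topological closedness via unitarity, then the inclusion via precompactness of the range of an almost periodic function. The one genuine difference is in how the standard facts are handled. The paper simply quotes them from \cite{COR}: closure of almost periodicity under sums is taken as known, the uniform-limit theorem is Theorem 6.4 there, and the precompactness of the set of values is Theorem 6.5. You instead derive two of these from a single tool, Bochner's compactness criterion — the successive-extraction argument for closure under addition, and the evaluation-at-$0$ map applied to the relatively compact set of translates for the inclusion $\mathcal{H}_{\mathrm{ap}}\subset\mathcal{H}_{\mathrm{pc}}$. Both of your derivations are sound (note that the orbit functions are automatically bounded by unitarity, so the translates do live in $C_b(\R,\mathcal{H})$ and Bochner's criterion applies, and the image of a relatively compact set under a continuous map is relatively compact). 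What your route buys is self-containedness and a unified mechanism; what the paper's buys is brevity, since everything is delegated to the textbook. The closedness step — $\sup_t\|U(t,0)\xi_n-U(t,0)\xi\|=\|\xi_n-\xi\|$ by unitarity, followed by the uniform-limit theorem — is identical in both proofs, and is the only place where a property of the quantum evolution (rather than of almost periodic functions in the abstract) is actually used.
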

\begin{proof} Clearly $0\in\mathcal{H}_{\mathrm{ap}}$.
If $\xi,\psi\in\mathcal{H}_{\mathrm{ap}}$ then
$\xi(t)=U(t,0)\xi$ and $\psi(t)=U(t,0)\psi$ are almost periodic
functions. Since the sum of two almost periodic functions with
values in $\mathcal{H}$ is an almost periodic function, it follows
that
$\xi(t)+\psi(t)=U(t,0)\xi+U(t,0)\psi=U(t,0)(\xi+\psi)=(\xi+\psi)(t)$
is an almost periodic function. So
$\xi+\psi\in\mathcal{H}_{\mathrm{ap}}$. Now, let
$\xi\in\mathcal{H}_{\mathrm{ap}}$ and $\lambda$ a complex
number, then $\xi(t)=U(t,0)\xi$ is an almost periodic function.
Since $\lambda\xi(t)=\lambda U(t,0)\xi=U(t,0)(\lambda\xi)$
is an almost periodic function, it follows that
$\lambda\xi\in\mathcal{H}_{\mathrm{ap}}$. So
$\mathcal{H}_{\mathrm{ap}}$ is a vector subspace of
$\mathcal{H}$.

Suppose that $\{\xi_j\}\subset\mathcal{H}_{\mathrm{ap}}$ and
$\lim_{j\rightarrow\infty}\xi_j=\xi$. Given $\epsilon>0$ there
exists $N\in\N$ such that $\|\xi_j-\xi\|<\epsilon$ for all $j\geq
N$; thus, there exists $N$ as above such that
$j\geq N$ implies that $\forall\;t\in\R$
\[\|\xi(t)-\xi_j(t)\|=\|U(t,0)\xi-U(t,0)\xi_j\|\leq\|\xi-\xi_j\|<\epsilon.\]
So $\xi_j(t)\rightarrow\xi(t)$ uniformly in $\R$ in the sense of
convergence in the norm. Since each $\xi_j(t)$ is an almost
periodic function, it follows that $\xi(t)$ is an almost periodic
function (Theorem 6.4 in~\cite{COR}) and
$\xi\in\mathcal{H}_{\mathrm{ap}}$, which shows that
$\mathcal{H}_{\mathrm{ap}}$ is a closed vector subspace of
$\mathcal{H}$.

Since the set of values of an almost periodic function with values
in $\mathcal{H}$ is precompact in $\mathcal{H}$ (Theorem 6.5
in~\cite{COR}), it follows that
$\mathcal{H}_{\mathrm{ap}}\subset\mathcal{H}_{\mathrm{pc}}$.
\end{proof}

\subsection{Periodic Systems}

If the Hamiltonian time dependence is periodic (or autonomous) more can be said.

\begin{Proposition}\label{EigenAlmostProp} If the Hamiltonian
operator is periodic in time and $\xi\in\mathcal{H}_{\mathrm{p}}$
is an eigenvector of $U_{\mathrm{F}}$, that is,
$U_{\mathrm{F}}\xi=e^{-i\alpha}\xi$, $\alpha\in\R$, then
$\xi\in\mathcal{H}_{\mathrm{ap}}\subset\mathcal{H}_{\mathrm{pc}}$.
\end{Proposition}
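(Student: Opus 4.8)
The plan is to reduce the almost periodicity of the orbit to a purely arithmetic statement about the rotation $n\mapsto e^{-in\alpha}$ on the unit circle. First I would exploit the time-periodicity of $H(t)$ to see how the orbit transforms under translation by the period $T$. From $U(t+T,r+T)=U(t,r)$ together with the cocycle identity $U(t,r)U(r,s)=U(t,s)$ one gets $U(t+T,0)=U(t+T,T)U(T,0)=U(t,0)U_{\mathrm{F}}$, so that the eigenvalue equation $U_{\mathrm{F}}\xi=e^{-i\alpha}\xi$ yields
\[
\xi(t+T)=U(t,0)U_{\mathrm{F}}\xi=e^{-i\alpha}\,\xi(t),
\]
and, by iteration, $\xi(t+nT)=e^{-in\alpha}\xi(t)$ for every $n\in\Z$. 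Two elementary remarks are used throughout: $t\mapsto\xi(t)=U(t,0)\xi$ is continuous because the propagators form a strongly continuous family, and $\|\xi(t)\|=\|\xi\|$ for all $t$ because each $U(t,0)$ is unitary.

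Next I would verify the defining almost-periodicity estimate directly, looking for almost periods among the integer multiples of $T$. For $\tau=nT$ the relation above gives, uniformly in $t\in\R$,
\[
\|\xi(t+nT)-\xi(t)\|=|e^{-in\alpha}-1|\,\|\xi(t)\|=|e^{-in\alpha}-1|\,\|\xi\|.
\]
Hence, given $\epsilon>0$ (and assuming $\xi\neq0$, the case $\xi=0$ being trivial), any integer $n$ with $|e^{-in\alpha}-1|<\epsilon/\|\xi\|$ produces an $\epsilon$-almost period $\tau=nT$. It therefore suffices to show that for each $\delta>0$ the set $N_\delta=\{n\in\Z:|e^{-in\alpha}-1|<\delta\}$ is relatively dense in $\Z$, since then $\{nT:n\in N_\delta\}$ is relatively dense in $\R$ and every interval of a fixed length meets it.

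The main point is precisely the relative density of $N_\delta$, and here I would separate two cases according to whether $\alpha/2\pi$ is rational. If $\alpha=2\pi p/q$ with $p,q\in\Z$, then $e^{-in\alpha}=1$ whenever $q$ divides $n$, so $N_\delta\supset q\Z$ is relatively dense with gap $q$. If $\alpha/2\pi$ is irrational, then $n\mapsto e^{-in\alpha}$ is the orbit of $1$ under an irrational rotation, which is minimal; consequently the return times to any neighborhood of $1$ form a syndetic (relatively dense) set, so $N_\delta$ is again relatively dense. In both cases the uniform estimate above shows that $\xi(t)=U(t,0)\xi$ is almost periodic, i.e. $\xi\in\mathcal{H}_{\mathrm{ap}}$, and the inclusion $\mathcal{H}_{\mathrm{ap}}\subset\mathcal{H}_{\mathrm{pc}}$ is Proposition~\ref{prop.27a}.

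Alternatively, and perhaps more conceptually, one can set $g(t)=e^{i\alpha t/T}\xi(t)$ and check from $\xi(t+T)=e^{-i\alpha}\xi(t)$ that $g$ is continuous and $T$-periodic, hence almost periodic; then $\xi(t)=e^{-i\alpha t/T}g(t)$ is a product of a scalar almost periodic function and a $\mathcal{H}$-valued almost periodic function, which is again almost periodic. I expect the only delicate step, in either route, to be the relative density of the return times in the irrational case, which rests on the minimality of irrational rotations rather than on any feature specific to the quantum dynamics.
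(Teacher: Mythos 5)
Your proof is correct, and its core reduction coincides with the paper's: from the periodicity relation and the cocycle identity you obtain $U(t+T,0)=U(t,0)U_{\mathrm{F}}$, hence $\xi(t+T)=e^{-i\alpha}\xi(t)$, which is exactly the identity the paper derives (though the paper does it more laboriously, writing $t=nT+s$ and treating the cases $n\geq0$ and $n<0$ separately). The difference is in what happens afterwards: the paper, having reached $\xi(t+T)=e^{-i\alpha}\xi(t)$, simply declares that $t\mapsto\xi(t)$ is almost periodic and stops, whereas you actually prove that implication. You do it in two ways: directly, by showing the candidate almost periods $\{nT:\;|e^{-in\alpha}-1|<\epsilon/\|\xi\|\}$ form a relatively dense set (trivially in the rational case $\alpha=2\pi p/q$, where they contain $qT\Z$, and via minimality of the irrational rotation, hence syndeticity of return times, otherwise); and alternatively by the factorization $\xi(t)=e^{-i\alpha t/T}g(t)$ with $g(t)=e^{i\alpha t/T}\xi(t)$ continuous and $T$-periodic, so that $\xi(t)$ is a product of a scalar almost periodic function and a vector-valued (periodic, hence almost periodic) function. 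Both completions are sound, and the continuity and unitarity observations needed to make them uniform in $t$ are in place; your write-up is therefore strictly more complete than the paper's, filling in the one step the authors leave as an unproved assertion, with the factorization route being essentially the standard fact they presumably had in mind.
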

\begin{proof} Since $U(t,0)$ is strongly continuous
the map $t\mapsto\xi(t)$ is continuous.

Any $t\in\R$ can be written in the form $t=nT+s$, with $n\in\Z$
and $0\leq s<T$. We have $U_{\mathrm{F}}\xi=e^{-i\alpha}\xi$ and
$U_{\mathrm{F}}^{-1}\xi=e^{i\alpha}\xi$. Since for $t\geq0$
($n\geq0$)
\begin{eqnarray*}
U(t,0)\xi &=& U(s+nT,nT)U(nT,(n-1)T)\ldots U(T,0)\xi\\ &=&
U(s,0)\underbrace{U(T,0)\ldots
U(T,0)}_{n\;\textrm{factors}}\xi=U(s,0)e^{-in\alpha}\xi,
\end{eqnarray*}
and for $t<0$ ($n<0$)
\begin{eqnarray*}
U(t,0)\xi &=& U(s+nT,nT)U(nT,(n+1)T)\ldots U(-T,0)\xi\\ &=&
U(s,0)\underbrace{U(T,0)^{-1}\ldots
U(T,0)^{-1}}_{n\;\textrm{factors}}\xi=U(s,0)e^{-in\alpha}\xi,
\end{eqnarray*}
it follows that
\[U(t,0)\xi=U(s,0)e^{-in\alpha}\xi,\] for $t=nT+s\in\R$,
$n\in\Z$ and $0\leq s<T$. So for each $t=nT+s\in\R$
\begin{eqnarray*}
\xi(t+T)&=&U(t+T,0)\xi=U(s,0)e^{-i(n+1)\alpha}\xi\\&=&
e^{-i\alpha}U(s,0)e^{-in\alpha}\xi=e^{-i\alpha}\xi
U(t,0)\xi=e^{-i\alpha}\xi(t),
\end{eqnarray*}
so $t\rightarrow\xi(t)$ is an almost periodic function and the
result is proved.
\end{proof}

Summing up, we conclude:

\

\begin{Theorem}\label{teo.31a} If the Hamiltonian operator is
periodic in time, then \newline (a)
$\mathcal{H}_{\mathrm{p}}=\mathcal{H}_{\mathrm{be}}=
\mathcal{H}_{\mathrm{pc}}=\mathcal{H}_{\mathrm{ap}}$;
\newline (b) $\mathcal{H}_{\mathrm{c}}=\mathcal{H}_{\mathrm{ue}}=
\mathcal{H}_{\mathrm{f}}$.
\end{Theorem}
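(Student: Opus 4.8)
The plan is to observe that almost all of this theorem is already contained in the machinery assembled just above it, so the proof amounts to slotting $\mathcal{H}_{\mathrm{ap}}$ into the chain of equalities from Theorem~\ref{teo.31}. Part (b) requires nothing new: it is verbatim Theorem~\ref{teo.31}(b), since $\mathcal{H}_{\mathrm{ap}}$ plays no role there. For part (a), Theorem~\ref{teo.31}(a) already delivers $\mathcal{H}_{\mathrm{p}}=\mathcal{H}_{\mathrm{be}}=\mathcal{H}_{\mathrm{pc}}$, so the only genuine task is to prove the single additional equality $\mathcal{H}_{\mathrm{ap}}=\mathcal{H}_{\mathrm{pc}}$ (equivalently $\mathcal{H}_{\mathrm{ap}}=\mathcal{H}_{\mathrm{p}}$).

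I would establish this by a two-sided inclusion. One inclusion, $\mathcal{H}_{\mathrm{ap}}\subset\mathcal{H}_{\mathrm{pc}}$, is handed to us directly by Proposition~\ref{prop.27a} and holds even without periodicity. The reverse inclusion $\mathcal{H}_{\mathrm{p}}\subset\mathcal{H}_{\mathrm{ap}}$ is where the periodic structure enters, and it is a clean density argument. First I would recall that the point subspace $\mathcal{H}_{\mathrm{p}}$ of the Floquet operator $U_{\mathrm{F}}$ is, by definition, the closed linear span of the eigenvectors of $U_{\mathrm{F}}$. Proposition~\ref{EigenAlmostProp} shows every such eigenvector lies in $\mathcal{H}_{\mathrm{ap}}$. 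Since Proposition~\ref{prop.27a} guarantees that $\mathcal{H}_{\mathrm{ap}}$ is a \emph{closed subspace}, it must contain not only the eigenvectors but also their closed linear span; hence $\mathcal{H}_{\mathrm{p}}\subset\mathcal{H}_{\mathrm{ap}}$.

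Chaining these together yields
\[
\mathcal{H}_{\mathrm{p}}\subset\mathcal{H}_{\mathrm{ap}}\subset\mathcal{H}_{\mathrm{pc}}=\mathcal{H}_{\mathrm{p}},
\]
where the final equality is Theorem~\ref{teo.31}(a), forcing equality throughout and completing part (a). I do not anticipate a serious obstacle here: the analytic content has been front-loaded into the two preceding propositions (almost periodicity of eigenvector orbits, and closedness of $\mathcal{H}_{\mathrm{ap}}$ together with stability under the sum of almost periodic functions). The only point demanding minor care is the tacit use of the fact that $\mathcal{H}_{\mathrm{p}}$ is \emph{exactly} the closed span of eigenvectors of $U_{\mathrm{F}}$ rather than something larger; this is simply the definition of the point spectral subspace recalled in Section~\ref{PeriodicSection}, so closedness of $\mathcal{H}_{\mathrm{ap}}$ is precisely the property needed to bridge from individual eigenvectors to the whole subspace. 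The same reasoning covers the autonomous case, with $U_{\mathrm{F}}$ replaced by $H$ and its eigenvectors, consistent with the remark following Theorem~\ref{teo.31}.
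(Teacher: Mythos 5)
Your proposal is correct and follows essentially the same route as the paper: reduce everything to the single equality $\mathcal{H}_{\mathrm{pc}}=\mathcal{H}_{\mathrm{ap}}$, get $\mathcal{H}_{\mathrm{ap}}\subset\mathcal{H}_{\mathrm{pc}}$ from Proposition~\ref{prop.27a}, and get $\mathcal{H}_{\mathrm{p}}\subset\mathcal{H}_{\mathrm{ap}}$ by combining Proposition~\ref{EigenAlmostProp} (eigenvectors have almost periodic orbits) with the closedness of $\mathcal{H}_{\mathrm{ap}}$, then invoke $\mathcal{H}_{\mathrm{p}}=\mathcal{H}_{\mathrm{pc}}$ from Theorem~\ref{teo.31}. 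Your write-up merely makes explicit the density step that the paper states as ``a consequence of Propositions~\ref{prop.27a} and~\ref{EigenAlmostProp}.''
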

\begin{proof}
It is enough to prove that
$\mathcal{H}_{\mathrm{pc}}=\mathcal{H}_{\mathrm{ap}}$. The inclusion
$\mathcal{H}_{\mathrm{ap}}\subset\mathcal{H}_{\mathrm{pc}}$ was proved in Proposition~\ref{prop.27a}.
On the other hand, it is a consequence of Propositions~~\ref{prop.27a} and~\ref{EigenAlmostProp} 
that
$\mathcal{H}_{\mathrm{p}}\subset\mathcal{H}_{\mathrm{ap}}$. Since
$\mathcal{H}_{\mathrm{p}}=\mathcal{H}_{\mathrm{pc}}$, it follows that
$\mathcal{H}_{\mathrm{pc}}\subset\mathcal{H}_{\mathrm{ap}}$.
\end{proof}

Theorem~\ref{teo.31a} holds also for autonomous Hamiltonians.

\subsection{Quasiperiodic Systems}
In the above theorem we proved that for time-periodic Hamiltonians an orbit $\mathcal{O}(\xi)$ is
precompact if, and only if, $t\mapsto\xi(t)$ is almost periodic. Now we construct an example showing
that already in the case of time-quasiperiodic Hamiltonians there are precompact orbits
that are not almost periodic.

\

\noindent{\bf Example} Given the matrix
\[
u_1(\theta_1)=
\left(\begin{array}{cc} e^{i\theta_1} & 0\\ 0 &
e^{-i\theta_1}\end{array}\right),
\] it is known (see Lemma 5.1 in~\cite{BJL})
that there exists a quasiperiodic Hamiltonian $H_{\theta}(t)$, $\theta=(\theta_1,\theta_2)$, acting
on
$\mathcal{H}=\C^2$, of the form
\begin{equation}\label{eq.25}
H_{\theta}(t)=h_0(t)\Id+\sum_{j=1}^3h_j(t)\sigma_j,
\end{equation}
where $\sigma_j$ are the Pauli matrices, and $h_j(t)$ are real
quasiperiodic functions, i.e., $h_j(t)=
\bar{h}_j(\omega_1t+\theta_1,\omega_2t+\theta_2)$, where
$\bar{h}_j(\theta_1,\theta_2)$ are continuous and $2\pi$-periodic
in the two arguments $\theta_1,\;\theta_2\in
S^1,$ and $\omega_1,\;\omega_2$ are positive real numbers 
so that $u_1(\theta_1)=U_{(\theta_1,0)}(T_2,0)$ is the
corresponding monodromy operator. Moreover, the corresponding
generalized Floquet operator $U_F=\mathcal{T}_{-T_2}u_1$ has
absolutely continuous spectrum for any irrational
$\alpha\doteq\frac{\omega_1}{\omega_2}$.

By the construction in
the proof of  Lemma~5.1 in~\cite{BJL}, it is found that for $k\in\Z$,
$k>0$,
\begin{eqnarray*}
U_{(\theta_1,0)}(kT_2,0) &=& u_1(\theta_1+(k-1)2\pi\alpha)\ldots
u_1(\theta_1+2\pi\alpha)u_1(\theta_1)\\ &=&
\left(\begin{array}{cc} e^{i(\theta_1+(k-1)2\pi\alpha)}& 0 \\ 0 &
e^{-i(\theta_1+(k-1)2\pi\alpha)}\\
\end{array}\right)\ldots\left(\begin{array}{cc}
e^{i\theta_1}& 0 \\ 0 & e^{-i\theta_1}\\
\end{array}\right)\\ &=& \left(\begin{array}{cc}
e^{i(\theta_1+(k-1)2\pi\alpha)}\ldots e^{i\theta_1} & 0
\\ 0 &
e^{-i(\theta_1+(k-1)2\pi\alpha)}\ldots e^{-i\theta_1}\\
\end{array}\right)\\ &=& \left(\begin{array}{cc}
e^{i(k\theta_1+(1+2+\ldots(k-1))2\pi\alpha)} & 0
\\ 0 & e^{-i(k\theta_1+(1+2+\ldots(k-1))2\pi\alpha)} \\
\end{array}\right)\\ &=& \left(\begin{array}{cc}
e^{ik(\theta_1+(k-1)\pi\alpha)} & 0
\\ 0 & e^{-ik(\theta_1+(k-1)\pi\alpha)} \\
\end{array}\right);
\end{eqnarray*}
for $k<0$ the same expression is found. Therefore, for all $k\in\Z$
\[U_{(\theta_1,0)}(kT_2,0)=\left(\begin{array}{cc}
e^{ik(\theta_1+(k-1)\pi\alpha)} & 0
\\ 0 & e^{-ik(\theta_1+(k-1)\pi\alpha)} \\
\end{array}\right).\] Moreover, for $\theta_1\in S^1$, $0\leq t\leq T_2$, define
\[v(t;\theta_1)=\left(\begin{array}{cc}
e^{i\frac{t}{T_2}(\theta_1+(\frac{t}{T_2}-1)\pi\alpha)} & 0
\\ 0 & e^{-i\frac{t}{T_2}(\theta_1+(\frac{t}{T_2}-1)\pi\alpha)} \\
\end{array}\right),\] which  is differentiable with respect to $t$ and satisfies
\[v(0;\theta_1)=\left(\begin{array}{cc}
1 & 0 \\ 0 & 1 \\ \end{array}\right)=\Id,\qquad
v(T_2;\theta_1)=\left(\begin{array}{cc} e^{i\theta_1} & 0 \\ 0 & e^{-i\theta_1} \\
\end{array}\right)=u_1(\theta_1).\] So for $t\in\R$, $t=kT_2+\delta_t$,
$0\leq\delta_t\leq T_2,$ one has
\[U_{(\theta_1,0)}(t,0)=v(\delta_t;\theta_1+k2\pi\alpha)U_{(\theta_1,0)}(kT_2,0).\]

Therefore, for $\xi\in\mathcal{H}=\C^2$, $\xi=\left(\begin{array}{cc}
\xi_1 \\ \xi_2 \\
\end{array}\right),$ we have \[U_{(\theta_1,0)}(t,0)\xi=\left(\begin{array}{cc}
e^{i\frac{t}{T_2}(\theta_1+(\frac{t}{T_2}-1)\pi\alpha)}\xi_1 \\
e^{-i\frac{t}{T_2}(\theta_1+(\frac{t}{T_2}-1)\pi\alpha)}\xi_2 \\
\end{array}\right).\] Since the map, for $0\neq
a\in\R$, $t\mapsto\sin at^2$ is not almost periodic, because
it is not uniformly continuous, we conclude that the
map $t\mapsto e^{iat^2}$ is not almost periodic.
Thus,  \[t\mapsto
g(t)=e^{i\frac{t}{T_2}(\theta_1+(\frac{t}{T_2}-1)\pi\alpha)}=e^{i\frac{t}{T_2}\theta_1}
e^{it^2\frac{\omega_1\omega_2}{4\pi}}e^{-it\frac{\omega_1}{2}}\] is not almost periodic,
because on the contrary the
map
\[
e^{-i\frac{t}{T_2}\theta_1}g(t)e^{it\frac{\omega_1}{2}}=
e^{it^2\frac{\omega_1\omega_2}{4\pi}}
\] would be almost periodic.

Therefore, if $\xi\neq0$ then the map $t\mapsto
U_{(\theta_1,0)}(t,0)\xi$ is not almost periodic for all
$\theta_1\in S^1$. Hence we have got an example of precompact orbit (a
closed and bounded set on $\C^2$ is compact) which is not almost periodic. This finishes the example.

\

The above example  can be extend to the infinite dimensional Hilbert space
$\mathcal{H}=\bigoplus_{n\in\N}\C^2$ of the elements
$\xi=(\xi_n)_{n\in\N}$ with $\xi_n\in\C^2$ and
$\sum_n|\xi_n|^2<\infty$. Denote
\[\tilde{u_1}(\theta_1)=\left(\begin{array}{cc} e^{i\theta_1} & 0\\
0 & e^{-i\theta_1}\\
\end{array}\right);\] we know that there exists a quasiperiodic
$\tilde{H}_{\theta}(t)$ such that $\tilde{u_1}(\theta_1)$ is the
corresponding monodromy operator. Moreover,
$\sigma(\tilde{U_{\mathrm{F}}})$ is absolutely continuous for all
irrational $\alpha$. Let
\[u_1(\theta_1)=\left(\begin{array}{cccc}
\left(\begin{array}{cc} e^{i\theta_1} & 0\\
0 & e^{-i\theta_1}\\
\end{array}\right) &  &  &  \\
  & \left(\begin{array}{cc} e^{i\theta_1} & 0\\
0 & e^{-i\theta_1}\\
\end{array}\right) & &  \\
 & & \left(\begin{array}{cc} e^{i\theta_1} & 0\\
0 & e^{-i\theta_1}\\
\end{array}\right) & \\
 & & & \ddots \\
\end{array}\right)\] or, writing in the another way,
$u_1(\theta_1)=\bigoplus\tilde{u_1}(\theta_1)$. For
$\xi\in\mathcal{H}$ one has
$u_1(\theta_1)\xi=\bigoplus\tilde{u_1}(\theta_1)\xi_n$. The
Floquet operator corresponding to $u_1(\theta_1)$,
$U_{\mathrm{F}}=\mathcal{T}_{-T_2}u_1:L^2(S^1,\mathcal{H},
\frac{d\theta_1}{2\pi})\rightarrow
L^2(S^1,\mathcal{H},\frac{d\theta_1}{2\pi})$ has absolutely
continuous spectrum for all irrational $\alpha$.

If $H_{\theta}(t)=\bigoplus_{n\in\N}\tilde{H}_{\theta}(t)$ then
the propagator of $H_{\theta}(t)$ is
$U_{\theta}(t,0)=\break\bigoplus\tilde{U}_{\theta}(t,0)$. Thus,
$H_{\theta}(t)$ has $u_1(\theta_1)$ as the corresponding monodromy
operator, and given $0\neq\xi\in\mathcal{H}$ and
$\theta=(\theta_1,0)\in S^1\times S^1$ one has
\begin{eqnarray*}
U_{(\theta_1,0)}(t,0)\xi &=& \bigoplus_n\left(\begin{array}{cc}
e^{i\frac{t}{T_2}(\theta_1+(\frac{t}{T_2}-1)\pi\alpha)} & 0\\
0 & e^{-i\frac{t}{T_2}(\theta_1+(\frac{t}{T_2}-1)\pi\alpha)}\\
\end{array}\right) \xi_n \\ &=&
\bigoplus_n\left(\begin{array}{c}
e^{i\frac{t}{T_2}(\theta_1+(\frac{t}{T_2}-1)\pi\alpha)}\xi_n^1\\
e^{-i\frac{t}{T_2}(\theta_1+(\frac{t}{T_2}-1)\pi\alpha)}\xi_n^2\\
\end{array}\right).
\end{eqnarray*}
So $t\mapsto U_{(\theta_1,0)}(t,0)\xi$ is not almost
periodic. If $\xi$ satisfies $\xi=\oplus\xi_n$ with $\xi_n\neq0$
if, and only if, $n=l$, then
\[U_{(\theta_1,0)}(t,0)\xi=\left(\begin{array}{c}
e^{i\frac{t}{T_2}(\theta_1+(\frac{t}{T_2}-1)\pi\alpha)}\xi_l^1\\
e^{-i\frac{t}{T_2}(\theta_1+(\frac{t}{T_2}-1)\pi\alpha)}\xi_l^2\\
\end{array}\right),\] and the orbit is precompact since it lives
in a finite dimension subspace. In the same way, if $\xi$ is of
the form $\xi=\oplus\xi_n$ with $\xi_n\neq0$ only for finitely
many indices $n$, we have an example of a theoretical quantum model with precompact orbits which are
not almost periodic.

\

\subsection{Quasienergy Operator and Almost Periodic Orbits}
Let $H(t)$ be a general time-dependent Hamiltonian in a Hilbert space
$\mathcal{H}$ such that the propagator
$U(t,s)$ is well defined. In this case we have defined the
quasienergy operator $K=-i\frac{d}{dt}+H(t)$ acting in the
extended Hilbert space $\mathcal{K}=L^2(\R,\mathcal{H},dt)$. It is
known~\cite{H,H1} that the quasienergy operator and the propagator
are connected by the relation
\begin{equation}\label{RelEq}
(e^{-iK\sigma}f)(t)=U(t,t-\sigma)f(t-\sigma).
\end{equation}
Let $\mathcal{K}_{\mathrm{p}}(K)$ and
$\mathcal{K}_{\mathrm{c}}(K)$ denote, respectively, the point
and continuous subspaces of $K$. We get the following result:

\

\begin{Proposition}\label{prop.31} Let $\xi\in\mathcal{H}$ be such that
$1\otimes\xi\in\mathcal{K}_{\mathrm{p}}(K)$. Then:
\begin{itemize}
\item[i)] The map $t\mapsto U(t,0)^{-1}\xi$ is almost periodic.
\item[ii)] If the eigenvectors
of $K$ have the form $\psi_m=1\otimes\xi_m$, with
$\xi_m\in\mathcal{H}$, then $\xi\in\mathcal{H}_{\mathrm{ap}}$.

\end{itemize}
\end{Proposition}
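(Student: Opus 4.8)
The plan is to combine the spectral expansion of $1\otimes\xi$ in eigenvectors of $K$ with the fundamental intertwining relation (\ref{RelEq}), reading off almost periodicity fiberwise. Since $1\otimes\xi\in\mathcal{K}_{\mathrm{p}}(K)$, I would first expand it in an orthonormal basis of eigenvectors $\psi_m$ of the self-adjoint operator $K$, writing $K\psi_m=\lambda_m\psi_m$ and $1\otimes\xi=\sum_m c_m\psi_m$. Exactly as in the autonomous computation recalled in the Introduction, applying the unitary group $e^{-iK\sigma}$ term by term gives
\[
e^{-iK\sigma}(1\otimes\xi)=\sum_m c_m e^{-i\lambda_m\sigma}\psi_m,
\]
a series converging in $\mathcal{K}$ uniformly in $\sigma$ (its tail norm $\sum_{m>N}|c_m|^2$ is independent of $\sigma$). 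Being a uniform limit of the trigonometric polynomials $\sigma\mapsto\sum_{m\le N}c_m e^{-i\lambda_m\sigma}\psi_m$, the map $\sigma\mapsto e^{-iK\sigma}(1\otimes\xi)$ is an almost periodic $\mathcal{K}$-valued function by Theorem~6.4 in~\cite{COR}.

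For (i), I would feed $f=1\otimes\xi$ into (\ref{RelEq}), which yields $(e^{-iK\sigma}(1\otimes\xi))(t)=U(t,t-\sigma)\xi$. Evaluating the fiber at $t=0$ and writing $s=-\sigma$ gives $U(s,0)^{-1}\xi=(e^{iKs}(1\otimes\xi))(0)=\sum_m c_m e^{i\lambda_m s}\,\psi_m(0)$. Each summand $s\mapsto c_m e^{i\lambda_m s}\psi_m(0)$ is almost periodic, and provided the series converges uniformly in $s$ the same Theorem~6.4 in~\cite{COR} shows that $s\mapsto U(s,0)^{-1}\xi$ is almost periodic. For (ii) the product hypothesis $\psi_m=1\otimes\xi_m$ makes the fiber independent of $t$: here $\xi=\sum_m c_m\xi_m$ and $e^{-iK\sigma}(1\otimes\xi)=1\otimes\big(\sum_m c_m e^{-i\lambda_m\sigma}\xi_m\big)$, so (\ref{RelEq}) gives $U(t,t-\sigma)\xi=\sum_m c_m e^{-i\lambda_m\sigma}\xi_m$ for every $t$. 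Setting $\sigma=t$ produces $U(t,0)\xi=\sum_m c_m e^{-i\lambda_m t}\xi_m$, again a uniform limit of trigonometric polynomials, whence $t\mapsto U(t,0)\xi$ is almost periodic, i.e. $\xi\in\mathcal{H}_{\mathrm{ap}}$.

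The main obstacle is that the relevant ``eigenvectors'' are constant-in-$t$ functions, which do not belong to $\mathcal{K}=L^2(\R,\mathcal{H},dt)$; thus $1\otimes\xi$ and the $\psi_m=1\otimes\xi_m$ must be handled as generalized eigenvectors, and the fiber evaluation at $t=0$ used in (i) is not an a priori continuous operation on $\mathcal{K}$. The delicate steps are therefore (a) justifying the termwise evaluation and the uniform-in-$s$ convergence of $\sum_m c_m e^{i\lambda_m s}\psi_m(0)$ so that Theorem~6.4 in~\cite{COR} applies, and (b) recognizing that the product structure assumed in (ii) is precisely what eliminates the $t$-dependence, rendering the evaluation trivial and the convergence manifestly uniform. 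I expect (a) to be where the genuine work lies, while (ii) is the clean case where the argument goes through without further hypotheses.
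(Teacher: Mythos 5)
Your proposal is correct and follows essentially the same route as the paper's own proof: expand $1\otimes\xi$ in eigenvectors of $K$, apply $e^{\pm iK\sigma}$ termwise, use relation (\ref{RelEq}) and evaluate the fiber at $t=0$ (for part i)) or on the diagonal $\sigma=t$ (for part ii)), concluding by uniform limits of trigonometric polynomials. The delicate points you flag --- that constant sections $1\otimes\xi$ do not lie in $L^2(\R,\mathcal{H},dt)$ and that fiber evaluation is not a continuous operation on $\mathcal{K}$ --- are equally present and left unaddressed in the paper's proof, which performs exactly the termwise evaluation you describe without further justification.
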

\begin{proof} If $1\otimes\xi\in\mathcal{K}_{\mathrm{p}}(K)$ then
$1\otimes\xi=\sum_mc_m\psi_m,$ with $K\psi_m=\lambda_m\psi_m$. So
\[e^{iK\sigma}(1\otimes\xi)=\sum_mc_me^{i\lambda_m\sigma}\psi_m,\]
therefore by (\ref{RelEq}) for each $t\in\R,$
\[U(t,t+\sigma)\xi=(e^{iK\sigma}(1\otimes\xi))(t)=
\sum_mc_me^{-i\lambda_m\sigma}\psi_m(t)\] and we conclude
that, for each fixed $t$, the map $\sigma\mapsto
U(t,t+\sigma)\xi$ is almost periodic. In particular taking $t=0$
we obtain that $\sigma\mapsto U(0,\sigma)\xi$ is almost periodic
and i) is proved.

Now, if the eigenvectors of $K$ have the form
$\psi_m=1\otimes\xi_m$, then
\begin{eqnarray*}
\xi(t)&=&U(t,0)\xi=(e^{-iKt}(1\otimes\xi))(t)\\&=&
\sum_mc_me^{-i\lambda_mt}\psi_m(t)\\&=&
\sum_mc_me^{-i\lambda_mt}\xi_m.
\end{eqnarray*} If the sum is finite the
map $t\mapsto\xi(t)$ is almost periodic since it is a
trigonometric polynomial. If the sum is infinite then
$\sum_{m=1}^kc_me^{-i\lambda_mt}\xi_m\rightarrow\sum_{m=1}^{\infty}c_me^{-i\lambda_mt}\xi_m$
uniformly as $k\rightarrow\infty$ and so the map
$t\mapsto\xi(t)$ is almost periodic, that is,
$\xi\in\mathcal{H}_{\mathrm{ap}}$, which is ii).
\end{proof}

\

\section{Bounded Energy}\label{BoundedSection}
In this section we consider time-dependent Hamiltonians
$H(t)=H_0+V(t)$ for which $H_0$ is a probe operator.

If $\psi_0\in\dom H_0$ and $\psi(t)=U(t,0)\psi_0$ is the solution
of the Schr\"{o}dinger equation, under which conditions
\[E_{\psi_0}^0(t)=\langle\psi(t),H_0\psi(t)\rangle\] is a bounded
function on $t$? Also, when
\[E_{\psi_0}(t)=\langle\psi(t),H(t)\psi(t)\rangle\] is a bounded
function? Next we present a set of simple general conditions related to the boundedness of such
energy functions.

\

\subsection{General Systems}
\begin{Proposition}\label{prop.38} If $V(t)$ is an uniformly bounded family of operators, that is,
$\sup_t\|V(t)\|<\infty$, then $E_{\psi_0}^0(t)$ is bounded if, and only if, $E_{\psi_0}(t)$ is
bounded.
\end{Proposition}
\begin{proof} It is sufficient to note that \[E_{\psi_0}(t) =
\langle\psi(t),H(t)\psi(t)\rangle =
E_{\psi_0}^0(t)+\langle\psi(t),V(t)\psi(t)\rangle\] and
\[\sup_t|\langle\psi(t),V(t)\psi(t)\rangle|\leq\sup_t\|\psi(t)\|^2\|V(t)\|=\sup_t
\|\psi_0\|^2\|V(t)\|<\infty.\]
\end{proof}

\begin{Proposition}\label{prop.39} If $\psi(t)\in
C^1(\R;\mathcal{H})$ is almost periodic and $\psi'(t)$ is
uniformly continuous, then $E_{\psi_0}(t)$ is bounded.
\end{Proposition}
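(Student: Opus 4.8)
The plan is to use the Schr\"odinger equation to turn the expectation $E_{\psi_0}(t)$ into an inner product of $\psi(t)$ with its derivative, and then to show that this derivative is bounded. Since $\psi(t)=U(t,0)\psi_0$ solves $i\psi'(t)=H(t)\psi(t)$, one has $H(t)\psi(t)=i\psi'(t)$, so that
\[
E_{\psi_0}(t)=\langle\psi(t),H(t)\psi(t)\rangle=i\langle\psi(t),\psi'(t)\rangle .
\]
That the right-hand side is real is consistent with $\|\psi(t)\|=\|\psi_0\|$ being constant, which forces $\Rem\langle\psi(t),\psi'(t)\rangle=0$.

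Next I would reduce the boundedness of $E_{\psi_0}(t)$ to that of $\psi'(t)$. Because $U(t,0)$ is unitary we have $\|\psi(t)\|=\|\psi_0\|$ for all $t$, so Cauchy--Schwarz gives
\[
|E_{\psi_0}(t)|\leq\|\psi(t)\|\,\|\psi'(t)\|=\|\psi_0\|\,\|\psi'(t)\| .
\]
Hence it suffices to prove that $\sup_{t\in\R}\|\psi'(t)\|<\infty$.

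The key step, and the place where both hypotheses are used, is to show that $\psi'$ is itself almost periodic; being almost periodic it is then bounded. For this I would invoke the classical fact that the derivative of an almost periodic $C^1$ function is almost periodic provided it is uniformly continuous. Concretely, for $h>0$ the difference quotients $t\mapsto h^{-1}(\psi(t+h)-\psi(t))$ are almost periodic (translates, differences and scalar multiples of almost periodic functions are almost periodic), and from the identity $h^{-1}(\psi(t+h)-\psi(t))-\psi'(t)=h^{-1}\int_0^h(\psi'(t+s)-\psi'(t))\,ds$ together with the uniform continuity of $\psi'$ one sees that these quotients converge to $\psi'$ uniformly in $t$ as $h\to0^+$. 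By Theorem~6.4 in~\cite{COR} a uniform limit of almost periodic functions is almost periodic, so $\psi'$ is almost periodic and therefore bounded (Theorem~6.5 in~\cite{COR} gives precompactness of its range, in particular boundedness).

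The main obstacle is precisely this last step: controlling $\psi'(t)$ uniformly in $t$. Almost periodicity of $\psi$ alone does not bound $\psi'$, and it is exactly the uniform continuity assumption that upgrades the pointwise convergence of the difference quotients to uniform convergence, allowing the passage to an almost periodic limit. Once boundedness of $\psi'$ is in hand, the conclusion is immediate from the displayed Cauchy--Schwarz estimate.
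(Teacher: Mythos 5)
Your proof is correct and follows essentially the same route as the paper's: both establish that $\psi'$ is almost periodic (hence bounded) by showing the difference quotients --- almost periodic because $\psi$ is --- converge to $\psi'$ uniformly thanks to the uniform continuity hypothesis, and then bound $E_{\psi_0}(t)=\langle\psi(t),i\psi'(t)\rangle$ via unitarity and Cauchy--Schwarz. The only differences are cosmetic: the paper indexes the quotients by $h=1/n$ rather than letting $h\to0^+$, and it leaves the final Cauchy--Schwarz estimate implicit where you spell it out.
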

\begin{proof} For each $n\in\N^*$ define
\[f_n(t)=n\left[\psi\left(t+\frac{1}{n}\right)-\psi(t)\right]=
n\int_{t}^{t+\frac{1}{n}}\psi'(s)ds.\] Since $\psi$ is almost
periodic it follows that $f_n$ is almost periodic for each $n$. As
$\psi'(t)$ is uniformly continuous, for each $\epsilon>0$ there
exists $\delta>0$ such that $|s-t|<\delta$ implies
$\|\psi'(t)-\psi'(s)\|<\epsilon$. Given $\epsilon>0$ let
$N(\epsilon)$ the smallest integer larger or equal to
$\frac{1}{\delta}$; then for all $n>N(\epsilon)$ and $t\in\R$
\begin{eqnarray*}
\left\|f_n(t)-\psi'(t)\right\| &=&
\left\|n\int_{t}^{t+\frac{1}{n}}(\psi'(s)-\psi'(t))ds\right\|\\ &\leq&
n\int_{t}^{t+\frac{1}{n}}\left\|\psi'(s)-\psi'(t)\right\|ds<\epsilon.
\end{eqnarray*}
So $f_n\rightarrow\psi'$ uniformly and therefore $\psi'(t)$ is
almost periodic.
Hence $i\psi'(t)$ and $\psi(t)$ are bounded
maps. Since
\[E_{\psi_0}(t)=\langle\psi(t),H(t)\psi(t)\rangle=
\langle\psi(t),i\frac{d\psi}{dt}(t)\rangle\] the result follows.
\end{proof}

\
Note that the boundedness of energy follows if $t\mapsto\psi(t)$ and $t\mapsto\psi'(t)$ are bounded
maps. Though well known, it is worth mentioning Proposition~\ref{prop.40} in this set of conditions.

\
\begin{Proposition}\label{prop.40} If $t\mapsto V(t)$ is strongly
$C^1$ and $\psi'(t)\in\dom H(t)$ for all $t$, then:
\begin{itemize}
\item[(a)] The map $t\mapsto E_{\psi}(t)$ is differentiable and
\[\frac{d}{dt}E_{\psi}(t)=\langle\psi(t),V'(t)\psi(t)\rangle.\]
\item[(b)] $\left|E_{\psi}(t)-E_{\psi}(0)\right|\leq t\times\sup_s\|V'(s)\|$.
\item[(c)] If there are $C>0, a>1$ so that  $\|V'(t)\|\leq\frac{C}{(1+|t|)^a}$, then
$E_{\psi}(t)$ and $E_{\psi}^0(t)$ are bounded functions.
\end{itemize}
\end{Proposition}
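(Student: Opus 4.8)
The plan is to establish (a) by differentiating $E_\psi(t)=\langle\psi(t),H(t)\psi(t)\rangle$ directly, then read off (b) by integration and (c) by an integrability estimate combined with Proposition~\ref{prop.38}. Throughout write $\psi(t)=U(t,0)\psi$ with $\psi\in\dom H_0$, so that domain invariance gives $\psi(t)\in\dom H(t)$ for all $t$ and $\psi$ solves $i\psi'(t)=H(t)\psi(t)$; in particular $H(t)\psi(t)=i\psi'(t)$ and hence $E_\psi(t)=i\langle\psi(t),\psi'(t)\rangle$, the same reduction already used in Proposition~\ref{prop.39}.

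For (a), the first task is to show $\psi$ is twice strongly differentiable. The hypotheses are tailored for this: differentiating $\psi'(t)=-i(H_0+V(t))\psi(t)$ once more, the strong $C^1$ regularity of $V$ licenses the product rule on $t\mapsto V(t)\psi(t)$, while closedness and $t$-independence of $H_0$ together with $\psi'(t)\in\dom H(t)$ give $\frac{d}{dt}H_0\psi(t)=H_0\psi'(t)$. This yields $\psi''(t)=-iV'(t)\psi(t)-iH(t)\psi'(t)$. Differentiating $E_\psi(t)=i\langle\psi(t),\psi'(t)\rangle$ then produces $\frac{d}{dt}E_\psi(t)=i\|\psi'(t)\|^2+i\langle\psi(t),\psi''(t)\rangle$, into which I substitute the expression for $\psi''$.

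The heart of the argument is the cancellation of the two kinetic contributions via self-adjointness. Expanding, $i\langle\psi(t),\psi''(t)\rangle=\langle\psi(t),V'(t)\psi(t)\rangle+\langle\psi(t),H(t)\psi'(t)\rangle$, and since both $\psi(t)$ and $\psi'(t)$ lie in $\dom H(t)$ one may move $H(t)$ across the inner product: $\langle\psi(t),H(t)\psi'(t)\rangle=\langle H(t)\psi(t),\psi'(t)\rangle=\langle i\psi'(t),\psi'(t)\rangle=-i\|\psi'(t)\|^2$. This exactly cancels the leading $i\|\psi'(t)\|^2$, leaving $\frac{d}{dt}E_\psi(t)=\langle\psi(t),V'(t)\psi(t)\rangle$, which is (a). I expect this symmetric reshuffle to be the main obstacle: it is precisely where the otherwise opaque domain hypothesis $\psi'(t)\in\dom H(t)$ is indispensable, since without $\psi'(t)\in\dom H(t)$ the identity $\langle\psi,H\psi'\rangle=\langle H\psi,\psi'\rangle$ is not justified and $\psi''$ need not even exist. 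Some care is also needed to confirm that $t\mapsto\langle\psi(t),V'(t)\psi(t)\rangle$ is continuous, so that (a) holds in the classical (not merely distributional) sense.

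Part (b) is then immediate: integrating (a) and estimating by Cauchy--Schwarz gives $|E_\psi(t)-E_\psi(0)|\le\int_0^{|t|}|\langle\psi(s),V'(s)\psi(s)\rangle|\,ds\le\int_0^{|t|}\|V'(s)\|\,ds\le|t|\sup_s\|V'(s)\|$, using $\|\psi(s)\|=\|\psi\|=1$ as $U(s,0)$ is unitary. For (c), the decay $\|V'(t)\|\le C(1+|t|)^{-a}$ with $a>1$ makes $\|V'\|$ integrable on $\R$, so the same integral bound yields $\sup_t|E_\psi(t)-E_\psi(0)|\le\int_0^\infty C(1+s)^{-a}\,ds<\infty$ and $E_\psi$ is bounded. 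The integrability of $\|V'\|$ also gives $\|V(t)\|\le\|V(0)\|+\int_0^{|t|}\|V'(s)\|\,ds$, so $\sup_t\|V(t)\|<\infty$; thus $V$ is a uniformly bounded family and Proposition~\ref{prop.38} transfers boundedness from $E_\psi$ to $E_\psi^0$, completing the proof.
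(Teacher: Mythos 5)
Your proof is correct and takes essentially the same route as the paper's: the paper differentiates $E_{\psi}(t)=\langle\psi(t),H(t)\psi(t)\rangle$ by the product rule and kills the two kinetic terms $\langle\psi'(t),i\psi'(t)\rangle+\langle i\psi'(t),\psi'(t)\rangle$ using the Schr\"odinger equation and self-adjointness of $H(t)$ (this is exactly where $\psi'(t)\in\dom H(t)$ enters), which is the same cancellation you obtain after routing through $\psi''$, and at the same level of formal rigor regarding the differentiation of $t\mapsto H_0\psi(t)$. Your part (c) --- integrability of $\|V'\|$ forcing $\sup_t\|V(t)\|<\infty$ and then Proposition~\ref{prop.38} to transfer boundedness to $E_{\psi}^0$ --- simply fills in the details the paper compresses into ``similar to (b).''
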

\begin{proof} (a) $E_{\psi}(t)=\langle\psi(t),(H_0+V(t))\psi(t)\rangle$ and so
\begin{eqnarray*}
\frac{d}{dt}E_{\psi}(t) &=& \langle\psi'(t),H(t)\psi(t)\rangle +
\langle\psi(t),H(t)\psi'(t)\rangle +
\langle\psi(t),V'(t)\psi(t)\rangle\\ &=&
\langle\psi'(t),i\psi'(t)\rangle + \langle i
\psi'(t),\psi'(t)\rangle + \langle\psi(t),V'(t)\psi(t)\rangle \\
&=& \langle\psi(t),V'(t)\psi(t)\rangle.
\end{eqnarray*}
(b) Since
\[E_{\psi}(t)-E_{\psi}(0)=\int_0^t\frac{d}{ds}E_{\psi}(s)ds=
\int_0^t\langle\psi(s),V'(s)\psi(s)\rangle ds\] the result
follows.
\newline (c) Similar to (b).
\end{proof}

A possibility for the proposition above is $V(t)=B_1\sin
t+\frac{B_2}{(1+|t|)^2}$ with $B_1,\;B_2\in B(\mathcal{H})$ and
self-adjoint. From this we see that certainly the choices of
$\psi$ depend on $B_1,B_2$, since $B_1\psi$ and $B_2\psi$ must be kept in suitable domains so that
$E_{\psi}(t)$ is meaningful.

\

\subsection{Purely Point Systems}
The next result is restricted to periodic time dependence and   Floquet
operators with nonempty point spectrum (see \cite{DSSV}).

\begin{Proposition}\label{prop.41} Let $V$ be periodic with period
$T$. If the subset $\{\xi_1,\ldots,\xi_n\}$ of eigenvectors
of $U_{\mathrm{F}}$ is in $\dom H_0$ and $t\mapsto\xi_j(t)$ are
$C^1$ maps, then for $\psi=\sum_{j=1}^na_j\xi_j$, where
$a_j\in\C,\;j=1,\cdots,n$, the map
$E_{\psi}(t)$ is bounded. If, moreover, $V(t)$ are bounded
operators and $\sup_t\|V(t)\|<\infty$, then $E_{\psi}^0(t)$ is also
bounded.
\end{Proposition}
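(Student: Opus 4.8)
The plan is to reduce the claim to the boundedness of the two maps $t\mapsto\psi(t)$ and $t\mapsto\psi'(t)$, exactly as in the remark following Proposition~\ref{prop.39}, and then to extract these bounds from the Floquet structure of each eigenvector. First I would record that, by the proof of Proposition~\ref{EigenAlmostProp}, each eigenvector $\xi_j$ with $U_{\mathrm{F}}\xi_j=e^{-i\alpha_j}\xi_j$ produces an orbit satisfying the quasiperiodicity relation $\xi_j(t+T)=e^{-i\alpha_j}\xi_j(t)$ for all $t\in\R$. Since $\psi=\sum_{j=1}^n a_j\xi_j$ is a finite linear combination, $\psi(t)=\sum_{j=1}^n a_j\xi_j(t)$ is almost periodic, hence bounded (its range is precompact); this disposes of the first of the two maps.

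The core step is to bound $\psi'(t)$. Each $t\mapsto\xi_j(t)$ is $C^1$ by hypothesis, so differentiating $\xi_j(t+T)=e^{-i\alpha_j}\xi_j(t)$ gives $\xi_j'(t+T)=e^{-i\alpha_j}\xi_j'(t)$. Introducing the Bloch-type factor $g_j(t)=e^{i\alpha_j t/T}\xi_j'(t)$, one checks immediately that $g_j(t+T)=g_j(t)$, so $g_j$ is a continuous $T$-periodic $\mathcal{H}$-valued map and therefore $\sup_{t\in\R}\|g_j(t)\|=\max_{t\in[0,T]}\|g_j(t)\|<\infty$. Since $\xi_j'(t)=e^{-i\alpha_j t/T}g_j(t)$ and the scalar factor has modulus one, $\|\xi_j'(t)\|=\|g_j(t)\|$ is bounded; taking the finite sum, $\psi'(t)=\sum_{j=1}^n a_j\xi_j'(t)$ is bounded as well.

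With both $\psi(t)$ and $\psi'(t)$ bounded, the Schr\"{o}dinger equation $i\psi'(t)=H(t)\psi(t)$ yields
\[
E_{\psi}(t)=\langle\psi(t),H(t)\psi(t)\rangle=\langle\psi(t),i\psi'(t)\rangle,
\]
so that $|E_{\psi}(t)|\le\|\psi(t)\|\,\|\psi'(t)\|$ is bounded, as observed after Proposition~\ref{prop.39}. For the final assertion, if in addition $V(t)$ are bounded with $\sup_t\|V(t)\|<\infty$, then Proposition~\ref{prop.38} applies and the boundedness of $E_{\psi}(t)$ is equivalent to that of $E_{\psi}^0(t)$; hence $E_{\psi}^0(t)$ is bounded too.

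I expect the main obstacle to be the legitimacy of the differentiated Floquet relation and the resulting factorization: one must invoke the $C^1$ hypothesis on the orbits to differentiate $\xi_j(t+T)=e^{-i\alpha_j}\xi_j(t)$ and then confirm that the periodic factor $g_j$ is genuinely bounded. The identification $H(t)\psi(t)=i\psi'(t)$ also requires that $\xi_j(t)$ remains in $\dom H(t)$ along the evolution, which is part of the standing assumptions that make the propagator and $E_{\psi}(t)$ meaningful.
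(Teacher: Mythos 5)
Your proof is correct, but it takes a genuinely different route from the paper's. The paper never bounds $\psi'(t)$: it works directly with the scalar matrix elements $E_{\xi_j,\xi_k}(t)=\langle\xi_j(t),H(t)\xi_k(t)\rangle=\langle\xi_j(t),i\tfrac{d}{dt}\xi_k(t)\rangle$, and uses the group property of the propagator together with $H(t+T)=H(t)$ and $U_{\mathrm{F}}\xi_j=e^{i\lambda_j}\xi_j$ to obtain the twisted periodicity $E_{\xi_j,\xi_k}(t+T)=e^{i(\lambda_k-\lambda_j)}E_{\xi_j,\xi_k}(t)$; since these functions are continuous, each is almost periodic, and expanding $E_{\psi}=\sum_{j,k}\bar{a}_ja_kE_{\xi_j,\xi_k}$ shows that $E_{\psi}$ is almost periodic, hence bounded. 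You apply the same twisted-periodicity mechanism one level down, to the vectors rather than the scalars: differentiating the Floquet relation $\xi_j(t+T)=e^{-i\alpha_j}\xi_j(t)$ (legitimate under the $C^1$ hypothesis) and factoring out a Bloch phase shows that $\|\xi_j'(t)\|$ is bounded, after which Cauchy--Schwarz applied to $E_{\psi}(t)=\langle\psi(t),i\psi'(t)\rangle$ finishes the argument, with cross terms never appearing. Your version is more elementary and quantitative, yielding the explicit bound $\sup_t|E_{\psi}(t)|\le\|\psi\|\sum_j|a_j|\max_{s\in[0,T]}\|\xi_j'(s)\|$, and it realizes exactly the remark following Proposition~\ref{prop.39}; the paper's version buys the stronger structural conclusion that $E_{\psi}(t)$ is itself an almost periodic function of $t$, not merely bounded. (Your route recovers that too with little extra effort: each $\xi_j'$ is a unimodular phase times a continuous periodic function, hence almost periodic, and inner products of almost periodic functions are almost periodic.) Both arguments rest on the same implicit use of the Schr\"odinger equation $i\psi'(t)=H(t)\psi(t)$ along the orbit, which you rightly flag, and both conclude the second statement identically from Proposition~\ref{prop.38}.
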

\begin{proof} Suppose $U_{\mathrm{F}}\xi_j=e^{i\lambda_j}\xi_j$ with $\lambda_j\in\R$,
$1\leq j\leq n$. We have
\[E_{\xi_j,\xi_k}(t)\doteq\langle\xi_j(t),H(t)\xi_k(t)\rangle=
\langle\xi_j(t),i\frac{d}{dt}\xi_k(t)\rangle\] and so
$t\mapsto E_{\xi_j,\xi_k}(t)$ is continuous. Now
\begin{eqnarray*}
E_{\xi_j,\xi_k}(t+T) &=& \langle
U(t+T,0)\xi_j,H(t+T)U(t+T,0)\xi_k\rangle\\ &=& \langle
U(t+T,T)U_{\mathrm{F}}\xi_j,H(t)U(t+T,T)U_{\mathrm{F}}\xi_k\rangle\\ &=& e^{-i\lambda_j}
e^{i\lambda_k}\langle U(t,0)\xi_j,H(t)U(t,0)\xi_k\rangle\\ &=&
e^{i(\lambda_k-\lambda_j)}E_{\xi_j,\xi_k}(t)
\end{eqnarray*}
and then $t\mapsto E_{\xi_j,\xi_k}(t)$ is an almost periodic
function. Since for $\psi=\sum_{j=1}^na_j\xi_j$ we have
$E_{\psi}(t)=\sum_{j,k=1}^n\bar{a_j}a_kE_{\xi_j,\xi_k}(t)$ it follows that $E_{\psi}(t)$ is almost periodic
and so bounded. The second statement follows by Proposition~\ref{prop.38}.
\end{proof}

According to Proposition~\ref{prop.41}, in order to get dynamical stability in the
periodic case we need conditions assuring the eigenvectors of
$U_{\mathrm{F}}$ are in $\dom H_0$ and $t\mapsto\xi_j(t)$ to be
$C^1$ functions. We present some sufficient conditions in terms of the quasienergy operator $K$.

\

\begin{Lemma}\label{lema.42} Let $\xi\in\mathcal{H}$ be such that
$H(t)U(t,s)\xi$ is well defined. Then the map $t\mapsto
H(t)U(t,s)\xi$ is a $C^r$ function if, and only if, $t\mapsto
e^{i\lambda(t-s)}U(t,s)\xi$ is a $C^{r+1}$ function for fixed
$\lambda,\;s\in\R$.
\end{Lemma}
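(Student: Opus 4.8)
The plan is to reduce the statement to the Schr\"odinger equation, which trades one derivative of the orbit for a factor of $H(t)$. Write $\phi(t)\doteq U(t,s)\xi$. The hypothesis that $H(t)U(t,s)\xi$ is well defined means precisely that $\phi(t)\in\dom H(t)$ for every $t$, and under the standing regularity assumptions on the family $H(t)$ (the conditions of \cite{K,K1,I} guaranteeing the propagators) the map $t\mapsto\phi(t)$ is then strongly differentiable and solves $i\phi'(t)=H(t)\phi(t)$. Hence the function whose smoothness we are probing is, up to a constant, a derivative:
\[
H(t)U(t,s)\xi=i\,\phi'(t).
\]
This single identity is what the whole argument rests on.

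Next I would dispose of the phase factor. For fixed $\lambda,s\in\R$ the scalar $t\mapsto e^{i\lambda(t-s)}$ is $C^\infty$ and nowhere vanishing, with $C^\infty$ inverse $e^{-i\lambda(t-s)}$. Multiplication by such a function is therefore a bijection of each class $C^{k}(\R;\mathcal{H})$ onto itself, so
\[
t\mapsto U(t,s)\xi \text{ is } C^{k} \iff t\mapsto e^{i\lambda(t-s)}U(t,s)\xi \text{ is } C^{k}
\]
for every $k$; the phase is cosmetic as far as differentiability is concerned, its role being only to match the phases of Floquet eigenvectors in the applications that follow.

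Combining the two observations yields the equivalence directly, in both directions at once. By the displayed identity, $H(t)U(t,s)\xi$ is $C^r$ if and only if $\phi'$ is $C^r$, which by definition holds if and only if $\phi=U(t,s)\xi$ is $C^{r+1}$, which by the phase remark holds if and only if $e^{i\lambda(t-s)}U(t,s)\xi$ is $C^{r+1}$. For the forward implication one reads this chain through $\phi'=-iH(t)U(t,s)\xi$; for the converse one writes $U(t,s)\xi=e^{-i\lambda(t-s)}\bigl(e^{i\lambda(t-s)}U(t,s)\xi\bigr)$, differentiates, and recovers $H(t)U(t,s)\xi=i\phi'$.

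The only genuinely delicate point is the first paragraph: justifying that mere well-definedness of $H(t)U(t,s)\xi$ upgrades to strong differentiability of $t\mapsto U(t,s)\xi$ with the Schr\"odinger equation holding in the strong sense. This is a standard but nontrivial fact from the theory of time-dependent propagators, and I would invoke the standing hypotheses on $H(t)$ rather than reprove it. Once that is granted, the remainder is the elementary bookkeeping of differentiability classes sketched above.
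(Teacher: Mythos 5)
Your proposal is correct and takes essentially the same route as the paper: both arguments rest on the Schr\"odinger identity $\frac{d}{dt}U(t,s)\xi=-iH(t)U(t,s)\xi$ (whose validity, as you note, comes from the standing hypotheses on the propagator) combined with the fact that multiplication by the smooth, nonvanishing phase $e^{i\lambda(t-s)}$ does not change the differentiability class. The paper simply merges these two observations into one product-rule computation of $\frac{d}{dt}\bigl(e^{i\lambda(t-s)}U(t,s)\xi\bigr)$, whereas you factor out the phase first; the substance is identical.
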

\begin{proof} Note that
\[\frac{d}{dt}(e^{i\lambda(t-s)}U(t,s)\xi)=i\lambda
e^{i\lambda(t-s)}U(t,s)\xi- ie^{i\lambda(t-s)}H(t)U(t,s)\xi.\]
Thus, if $t\mapsto H(t)U(t,s)\xi$ is $C^r$ then $t\mapsto
e^{i\lambda(t-s)}U(t,s)\xi$ is $C^{r+1}$ and reciprocally if
$t\mapsto e^{i\lambda(t-s)}U(t,s)\xi$ is $C^{r+1}$ then $t\mapsto
H(t)U(t,s)\xi$ is $C^r$.
\end{proof}

\

\begin{Corollary}\label{cor.43} If $f^{(\lambda)}$ is an eigenvector
of $K$, $Kf^{(\lambda)}=\lambda f^{(\lambda)}$, then the
map $t\mapsto f^{(\lambda)}(t)$ is $C^r$  if,
and only if, there exists $s\in\R$ so that $t\mapsto
H(t)U(t,s)f^{(\lambda)}(s)$ is $C^{r-1}$.
\end{Corollary}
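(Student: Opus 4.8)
The plan is to reduce Corollary~\ref{cor.43} to Lemma~\ref{lema.42} by first extracting an explicit formula for $f^{(\lambda)}(t)$ from the eigenvalue equation. I would begin from the fundamental relation~(\ref{RelEq}) together with $Kf^{(\lambda)}=\lambda f^{(\lambda)}$. The latter gives $e^{-iK\sigma}f^{(\lambda)}=e^{-i\lambda\sigma}f^{(\lambda)}$, so evaluating both sides at $t$ via~(\ref{RelEq}) yields
\[
e^{-i\lambda\sigma}f^{(\lambda)}(t)=U(t,t-\sigma)f^{(\lambda)}(t-\sigma).
\]
Setting $s=t-\sigma$ and rearranging produces the key identity
\[
f^{(\lambda)}(t)=e^{i\lambda(t-s)}U(t,s)\,f^{(\lambda)}(s),\qquad \forall\,s,t\in\R.
\]
This identity is the heart of the argument: it exhibits $t\mapsto f^{(\lambda)}(t)$ as precisely the map $t\mapsto e^{i\lambda(t-s)}U(t,s)\xi$ of Lemma~\ref{lema.42}, with the choice $\xi=f^{(\lambda)}(s)\in\mathcal{H}$. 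The hypothesis of that Lemma, namely that $H(t)U(t,s)\xi$ be well defined, is exactly what is presupposed in the statement of the corollary.

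With the identity in hand, I would invoke Lemma~\ref{lema.42} with $\xi=f^{(\lambda)}(s)$, applying it with its index $r$ replaced by $r-1$. The Lemma then asserts that $t\mapsto H(t)U(t,s)f^{(\lambda)}(s)$ is $C^{r-1}$ if, and only if, $t\mapsto e^{i\lambda(t-s)}U(t,s)f^{(\lambda)}(s)$ is $C^{r}$. Since by the key identity the latter map is exactly $t\mapsto f^{(\lambda)}(t)$, whose regularity class does not depend on $s$, this equivalence holds for each fixed $s$. Consequently the existence of a single $s$ for which $t\mapsto H(t)U(t,s)f^{(\lambda)}(s)$ is $C^{r-1}$ is equivalent to $t\mapsto f^{(\lambda)}(t)$ being $C^r$, which is the claim.

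The proof is thus short once the key identity is established, and the only point requiring care is the bookkeeping of differentiability indices. Lemma~\ref{lema.42} relates $C^r$ smoothness of $t\mapsto H(t)U(t,s)\xi$ to $C^{r+1}$ smoothness of $t\mapsto e^{i\lambda(t-s)}U(t,s)\xi$; to match the corollary I must therefore apply it one level lower, at $r-1$, so that $C^{r+1}$ becomes $C^r$. I expect this index shift to be essentially the only place where an error could creep in. The derivation of the explicit formula for $f^{(\lambda)}(t)$ from~(\ref{RelEq}) is routine manipulation, and no analytic input beyond Lemma~\ref{lema.42} is required.
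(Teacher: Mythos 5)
Your proposal is correct and follows essentially the same route as the paper's own proof: derive the identity $f^{(\lambda)}(t)=e^{i\lambda(t-s)}U(t,s)f^{(\lambda)}(s)$ from relation~(\ref{RelEq}) and the eigenvalue equation, then apply Lemma~\ref{lema.42} with $\xi=f^{(\lambda)}(s)$ and the index shifted to $r-1$. Your extra remarks on the index bookkeeping and on the ``there exists $s$'' quantifier only make explicit what the paper leaves implicit.
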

\begin{proof} If $Kf^{(\lambda)}=\lambda f^{(\lambda)}$, then by
relation (\ref{RelEq}),
\[e^{-i\lambda\sigma}f^{(\lambda)}(t)=U(t,t-\sigma)f^{(\lambda)}(t-\sigma);\] so
$f^{(\lambda)}(t)=e^{i\lambda\sigma}U(t,t-\sigma)f^{(\lambda)}(t-\sigma)$
for all $\sigma\in\R$. Denoting $t-\sigma=s$ it follows that
$f^{(\lambda)}(t)=e^{i\lambda(t-s)}U(t,s)f^{(\lambda)}(s)$ and the
result follows by  Lemma~\ref{lema.42}.
\end{proof}

By using relation (\ref{RelEq}) one can easily show

\begin{Lemma}\label{Lemma12} For periodic systems with period
$T$, one has:
\begin{itemize}
\item[(a)] If $Kf=\lambda f$ then
$U_{\mathrm{F}}(s)f(s)=e^{-i\lambda T}f(s)$, $\forall\;s\in\R$.
\item[ (b)] If $U_{\mathrm{F}}(s)\xi_s=e^{-i\lambda T}\xi_s$,
$\xi_s\in\mathcal{H}$, $\forall\;s$, then
\[
f_{\xi}(t)\doteq \break e^{i\lambda(t-s)}U(t,s)\xi_s\in\dom K
\]
and $Kf_{\xi}=\lambda f_{\xi}$.
\end{itemize}
\end{Lemma}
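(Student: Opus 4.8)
The plan is to read both statements straight off the intertwining relation~(\ref{RelEq}), using only the cocycle identity $U(t,r)U(r,s)=U(t,s)$, the time-periodicity $U(t+T,r+T)=U(t,r)$, and the fact that in the periodic setting an eigenvector $f$ of $K$ is a $T$-periodic $\mathcal{H}$-valued function, so $f(t+T)=f(t)$ for all $t$. No additional machinery is required; the work is entirely in arranging the propagators.

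For part (a) I would begin from $Kf=\lambda f$, which gives $e^{-iK\sigma}f=e^{-i\lambda\sigma}f$ for every $\sigma\in\R$. Taking $\sigma=T$ and applying~(\ref{RelEq}) yields
\[e^{-i\lambda T}f(t)=U(t,t-T)f(t-T),\qquad t\in\R.\]
Since $U(t,t-T)=U_{\mathrm{F}}(t-T)$ is the Floquet operator at $t-T$, and since both $f$ and the family $s\mapsto U_{\mathrm{F}}(s)$ are $T$-periodic, the right-hand side equals $U_{\mathrm{F}}(t)f(t)$. Setting $t=s$ gives $U_{\mathrm{F}}(s)f(s)=e^{-i\lambda T}f(s)$.

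For part (b) I would first check that $f_{\xi}(t)=e^{i\lambda(t-s)}U(t,s)\xi_s$ is a legitimate element of the periodic space, i.e.\ that it is $T$-periodic. Writing $U(t+T,s)=U(t+T,s+T)U(s+T,s)=U(t,s)U_{\mathrm{F}}(s)$ and inserting the hypothesis $U_{\mathrm{F}}(s)\xi_s=e^{-i\lambda T}\xi_s$, the factors $e^{i\lambda T}$ and $e^{-i\lambda T}$ cancel and one obtains $f_{\xi}(t+T)=f_{\xi}(t)$. It then remains to compute the $K$-action: feeding $f_{\xi}$ into~(\ref{RelEq}) and collapsing the propagators by the cocycle identity gives
\[(e^{-iK\sigma}f_{\xi})(t)=e^{i\lambda(t-\sigma-s)}U(t,s)\xi_s=e^{-i\lambda\sigma}f_{\xi}(t),\]
so $e^{-iK\sigma}f_{\xi}=e^{-i\lambda\sigma}f_{\xi}$ for all $\sigma$. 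By Stone's theorem this forces $f_{\xi}\in\dom K$ with $Kf_{\xi}=\lambda f_{\xi}$.

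Since each computation is a few lines, the only delicate points are organizational. In (a) the essential step is invoking the $T$-periodicity of $f$ to replace $U(t,t-T)f(t-T)$ by $U_{\mathrm{F}}(t)f(t)$; without it a spurious time shift survives. In (b) the order matters: the periodicity of $f_{\xi}$ must be verified before speaking of $Kf_{\xi}$, since otherwise $f_{\xi}$ need not lie in the space on which $K$ acts, and the final inference---from ``$e^{-iK\sigma}f_{\xi}$ is a scalar multiple of $f_{\xi}$ for all $\sigma$'' to membership in $\dom K$ with eigenvalue $\lambda$---is precisely Stone's theorem rather than an algebraic triviality. These bookkeeping checks, not any conceptual difficulty, are where I expect the care to be needed.
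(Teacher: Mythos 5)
Your proof is correct and takes exactly the route the paper intends: the paper offers no written proof of this lemma beyond the remark that it follows from relation~(\ref{RelEq}), and your argument supplies precisely those omitted details (the cocycle identity, the $T$-periodicity of elements of the periodic enlarged space, and Stone's theorem to get $f_{\xi}\in\dom K$ with $Kf_{\xi}=\lambda f_{\xi}$). Your observation that the periodicity check in (b) must precede any statement about $K$ is also the right bookkeeping, since on $L^2(\R,\mathcal{H},dt)$ the function $f_{\xi}$, having constant norm, would not even lie in the space.
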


\

\begin{Corollary}\label{cor.44} (a) If $H(t+T)=H(t)$, and $\xi^{(\lambda)}$
is an eigenvector of $U_{\mathrm{F}}(s)$,
$U_{\mathrm{F}}(s)\xi^{(\lambda)}=e^{-i\lambda T}\xi^{(\lambda)}$,
then $\xi^{(\lambda)}\in\dom H(s)$ if, and only if, there exists an eigenvector $f_{\xi^{(\lambda)}}$ of $K$,
$Kf_{\xi^{(\lambda)}}=\lambda f_{\xi^{(\lambda)}}$, with $t\mapsto
f_{\xi^{(\lambda)}}(t)$ continuous and differentiable.
\newline (b) In particular, $U_{\mathrm{F}}(s)$ has a basis of eigenvectors
in $\dom H(s)$ if, and only if, $K$ has a basis of eigenvectors
$\{f_j\}$ such that $t\mapsto f_j(t)$ is continuous and
differentiable for each $j$.
\end{Corollary}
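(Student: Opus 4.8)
The plan is to assemble the statement directly from Lemma~\ref{Lemma12} and Corollary~\ref{cor.43}, which between them translate the \emph{algebraic} eigenvalue correspondence and the \emph{regularity} correspondence between $U_{\mathrm{F}}(s)$ and $K$. Nothing genuinely new has to be created; the work is to check that the two correspondences are compatible.

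First, for the implication starting from $\xi^{(\lambda)}$: given that $U_{\mathrm{F}}(s)\xi^{(\lambda)}=e^{-i\lambda T}\xi^{(\lambda)}$, I would invoke Lemma~\ref{Lemma12}(b) to produce the candidate eigenvector $f_{\xi^{(\lambda)}}(t)=e^{i\lambda(t-s)}U(t,s)\xi^{(\lambda)}$ of $K$, observing that it satisfies $f_{\xi^{(\lambda)}}(s)=\xi^{(\lambda)}$ and $Kf_{\xi^{(\lambda)}}=\lambda f_{\xi^{(\lambda)}}$. Because $U(t,s)$ is strongly continuous, $t\mapsto f_{\xi^{(\lambda)}}(t)$ is automatically continuous, so the only point at issue is differentiability. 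Here I would apply Corollary~\ref{cor.43} with $r=1$: the map $t\mapsto f_{\xi^{(\lambda)}}(t)$ is $C^1$ if, and only if, $t\mapsto H(t)U(t,s)f_{\xi^{(\lambda)}}(s)=H(t)U(t,s)\xi^{(\lambda)}$ is continuous. This reduces the whole of part~(a) to the single equivalence $\xi^{(\lambda)}\in\dom H(s)\iff t\mapsto H(t)U(t,s)\xi^{(\lambda)}$ is continuous.

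For the converse direction I would run the same chain backwards: starting from a continuous and differentiable eigenvector $f$ of $K$ with $Kf=\lambda f$, Lemma~\ref{Lemma12}(a) gives $U_{\mathrm{F}}(s)f(s)=e^{-i\lambda T}f(s)$, so $\xi^{(\lambda)}\doteq f(s)$ is the required eigenvector of $U_{\mathrm{F}}(s)$, and Corollary~\ref{cor.43} again ties the $C^1$ regularity of $f$ to continuity of $t\mapsto H(t)U(t,s)f(s)$. Part~(b) then follows by applying~(a) termwise: the assignment $\xi^{(\lambda)}\mapsto f_{\xi^{(\lambda)}}$ is a bijection carrying a basis of eigenvectors of $U_{\mathrm{F}}(s)$ lying in $\dom H(s)$ onto a basis of $C^1$ eigenvectors of $K$, and conversely, so the two ``basis'' conditions are equivalent.

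The main obstacle is exactly the equivalence to which everything collapses, namely that membership $\xi^{(\lambda)}\in\dom H(s)$ matches continuity of $t\mapsto H(t)U(t,s)\xi^{(\lambda)}$. One implication is immediate: continuity forces the map to be defined at $t=s$, where it equals $H(s)\xi^{(\lambda)}$, whence $\xi^{(\lambda)}\in\dom H(s)$. The reverse implication is the delicate point, and it is where I would use the standing regularity hypotheses on the propagator together with the periodicity observation that $U(t+T,s)=U(t,s)U_{\mathrm{F}}(s)$ gives $H(t+T)U(t+T,s)\xi^{(\lambda)}=e^{-i\lambda T}H(t)U(t,s)\xi^{(\lambda)}$; hence $t\mapsto e^{i\lambda t}H(t)U(t,s)\xi^{(\lambda)}$ is $T$-periodic and continuity need only be verified on a single period $[s,s+T]$, which is precisely the regularity that domain membership is designed to supply in this framework.
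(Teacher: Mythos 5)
Your forward implication and your treatment of part (b) are sound and essentially follow the paper's route: Lemma~\ref{Lemma12}(b) produces the candidate eigenvector $f_{\xi^{(\lambda)}}(t)=e^{i\lambda(t-s)}U(t,s)\xi^{(\lambda)}$, and the required regularity comes from the standard property of propagators under the Kato-type hypotheses (for $\xi^{(\lambda)}\in\dom H(s)$ one has $U(t,s)\xi^{(\lambda)}\in\dom H(t)$ and $i\partial_tU(t,s)\xi^{(\lambda)}=H(t)U(t,s)\xi^{(\lambda)}$), which is exactly what the paper invokes. Note that here the well-definedness hypothesis needed for Corollary~\ref{cor.43} is supplied by domain invariance, and your reduction by periodicity to the window $[s,s+T]$ is correct but superfluous, since the propagator property is global in $t$.

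The converse implication, however, contains a genuine gap: as written it is circular. You want to conclude $\xi^{(\lambda)}=f(s)\in\dom H(s)$, and you do so by invoking Corollary~\ref{cor.43} to obtain continuity of $t\mapsto H(t)U(t,s)f(s)$ and then evaluating at $t=s$. But Corollary~\ref{cor.43} is proved via Lemma~\ref{lema.42}, whose standing hypothesis is that $H(t)U(t,s)\xi$ is \emph{well defined}, i.e., that $U(t,s)\xi\in\dom H(t)$ for every $t$; at $t=s$ this hypothesis is precisely the membership $f(s)\in\dom H(s)$ you are trying to prove. Well-definedness is not a consequence of ``continuity'' --- it is a precondition for even stating it --- so Lemma~\ref{lema.42} and Corollary~\ref{cor.43} can only transfer regularity once domain membership is already known; they cannot produce it. (A second, smaller mismatch: the statement grants only that $f$ is continuous and differentiable, whereas your appeal to Corollary~\ref{cor.43} needs $f\in C^1$.) The paper avoids this by arguing directly from the eigenvalue equation: since $f(t)=e^{i\lambda(t-s)}U(t,s)f(s)$ is pointwise differentiable and $Kf=\lambda f$ with $K=-i\frac{d}{dt}+H(t)$, the equation $-i\partial_tf(t)+H(t)f(t)=\lambda f(t)$ holds pointwise, and the differentiability of $t\mapsto U(t,s)f(s)$ forces $U(t,s)f(s)\in\dom H(t)$ with derivative $-iH(t)U(t,s)f(s)$; in particular $f(s)=\xi^{(\lambda)}\in\dom H(s)$. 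Replacing your use of Corollary~\ref{cor.43} in the converse direction by this direct argument closes the gap.
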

\begin{proof} (a) Suppose that $\xi^{(\lambda)}\in\dom H(s)$. By
Lemma~\ref{Lemma12} $f_{\xi^{\lambda}}(t)=\break
e^{i\lambda(t-s)}U(t,s)\xi^{\lambda}\in\dom K$ and
$Kf_{\xi^{\lambda}}=\lambda f_{\xi^{\lambda}}$. Since
$\xi^{(\lambda)}\in\dom H(s)$ it follows that $U(t,s)\xi^{\lambda}\in\dom
H(t)$ and
$i\partial_tU(t,s)\xi^{\lambda}=H(t)U(t,s)\xi^{\lambda}$. Thus,
$t\mapsto f_{\xi^{(\lambda)}}(t)$ is continuous and
differentiable.

Reciprocally, it there exists an eigenvector $f_{\xi^{(\lambda)}}$  of
$K$ with $t\mapsto f_{\xi^{(\lambda)}}(t)$ continuous and
differentiable, then
$f_{\xi^{\lambda}}(t)=e^{i\lambda(t-s)}U(t,s)\xi^{\lambda}$ and
$Kf_{\xi^{(\lambda)}}=\lambda f_{\xi^{(\lambda)}}$ implies
$-i\partial_tf_{\xi^{\lambda}}(t)+H(t)f_{\xi^{\lambda}}(t)=\lambda
f_{\xi^{\lambda}}(t)$; therefore, $\xi^{\lambda}\in\dom H(s)$.

(b) It is a directly consequence of (a).
\end{proof}

\

\subsection{Jauslin-Lebowitz Formulation}
We want to study an analogue of the expectation value of probe operators $A:\dom
A\subset\mathcal{H}\rightarrow\mathcal{H}$ on the formulation presented by Jauslin and Lebowitz
\cite{JL,BJL} briefly recalled in the Introduction.  If the generalized
quasienergy  operator $\tilde{K}$ has pure point spectrum, there
exists an orthonormal basis $B\doteq\{f_n\}_{n=1}^{\infty}$ of
$\tilde{\mathcal{K}}$ with $\tilde{K}f_n=\lambda_nf_n$. By Theorem
4.2 in~\cite{JL}, if $f=1\otimes\xi$ is in the point subspace of
$\tilde{K}$ the function $t\mapsto U_{\theta}(t,0)\xi$ is almost
periodic a.e.\ $\theta$ with respect to the ergodic measure $\mu$ on the
compact manifold $\Omega$ (see Section~\ref{IntroductionSection}).

Denote \[B_{n,m}(A)\doteq\int_{\Omega}\left\langle
f_n(\theta),Af_m(\theta)\right\rangle_{\mathcal{H}}d\mu(\theta)=\left\langle
f_n,(1\otimes A)f_m\right\rangle_{\tilde{\mathcal{K}}}.\] If
$f\in\tilde{\mathcal{K}}$ then $f=\sum_na_nf_n$, with
$\sum_n|a_n|^2=\|f\|^2_{\tilde{\mathcal{K}}}$. For each time $t$, consider the average over $\Omega$ of
the expectation value of $A$, that is,
\begin{eqnarray*}
A_f(t) &\doteq& \int_{\Omega}\left\langle
U_{\theta}(t,0)f(\theta),AU_{\theta}(t,0)f(\theta)\right\rangle_{\mathcal{H}}d\mu(\theta)\\
&=& \int_{\Omega}\langle
(\mathcal{F}_te^{-i\tilde{K}t}f)(\theta),A(\mathcal{F}_te^{-i\tilde{K}t}f)(\theta)
\rangle_{\mathcal{H}}d\mu(\theta)
\end{eqnarray*}
\begin{eqnarray*}
&=& \left\langle\mathcal{F}_te^{-i\tilde{K}t}f,(1\otimes
A)\mathcal{F}_te^{-i\tilde{K}t}f\right\rangle_{\tilde{\mathcal{K}}}\\
&=& \left\langle e^{-i\tilde{K}t}f,(1\otimes
A)e^{-i\tilde{K}t}f\right\rangle_{\tilde{\mathcal{K}}}\\ &=&
\sum_{n,m}\overline{a_n}a_me^{-it(\lambda_m-\lambda_n)}\left\langle
f_n,(1\otimes A)f_m\right\rangle_{\tilde{\mathcal{K}}}\\ &=&
\sum_{n,m}\overline{a_n}a_me^{-it(\lambda_m-\lambda_n)}B_{n,m}(A).
\end{eqnarray*}

Note that if this sum is absolutely convergent then $A_f(t)$ is a
bounded and almost periodic function of $t$, and \[t\mapsto\left\langle
U_{\theta}(t,0)f(\theta),AU_{\theta}(t,0)f(\theta)\right\rangle_{\mathcal{H}}\]
is bounded a.e.\ $\theta$. We conclude

\

\begin{Proposition}\label{cor.45} If
$f=\sum_{j=1}^{m}a_jf_j$, where $f_j$ are eigenvectors of
$\tilde{K}$ and $f_j(\theta)\in\dom A$, for all $\theta$, then
$t\mapsto A_f(t)$ is a bounded and almost periodic function.
Moreover,
\[t\mapsto\left\langle
U_{\theta}(t,0)f(\theta),AU_{\theta}(t,0)f(\theta)\right\rangle_{\mathcal{H}}\]
is bounded for almost every $\theta$.
\end{Proposition}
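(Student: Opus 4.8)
The plan is to lean on the explicit formula for $A_f(t)$ derived immediately before the statement and to observe that the hypothesis $f=\sum_{j=1}^{m}a_jf_j$ forces the double series there to collapse to a finite sum, so that the absolute convergence invoked in the remark preceding the proposition holds for free. The first (integrated) assertion then drops out at once; the genuine work, and the genuine obstacle, is the pointwise-in-$\theta$ assertion.

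First I would record that, since only finitely many coefficients are nonzero, the expansion reduces to the finite sum $A_f(t)=\sum_{j,k=1}^{m}\overline{a_j}a_k\,e^{-it(\lambda_k-\lambda_j)}B_{j,k}(A)$. Each coefficient $B_{j,k}(A)=\langle f_j,(1\otimes A)f_k\rangle_{\tilde{\mathcal{K}}}$ is finite: the assumption $f_k(\theta)\in\dom A$ for all $\theta$ (together with $f_k\in\dom(1\otimes A)$, which is implicit in order that $A_f$ be defined) makes $(1\otimes A)f_k\in\tilde{\mathcal{K}}$, and Cauchy--Schwarz then bounds $B_{j,k}(A)$. Hence $t\mapsto A_f(t)$ is a finite trigonometric polynomial with frequencies $\lambda_k-\lambda_j$; being a finite linear combination of characters it is bounded, with $|A_f(t)|\le\sum_{j,k}|a_j||a_k||B_{j,k}(A)|$, and almost periodic. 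This settles the first assertion (and verifies the absolute-convergence hypothesis of the preceding remark).

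For the second assertion I would start from the integrand. Using $\tilde{K}f_j=\lambda_jf_j$ and the intertwining $(e^{-i\tilde K t}f)_\theta=\mathcal{F}_{-t}U_\theta(t,0)f_\theta$ recalled in the Introduction, one obtains $U_\theta(t,0)f_j(\theta)=e^{-i\lambda_j t}f_j(g_t(\theta))$, whence
\[
g(t,\theta)\doteq\langle U_\theta(t,0)f(\theta),A\,U_\theta(t,0)f(\theta)\rangle_{\mathcal H}=\sum_{j,k=1}^{m}\overline{a_j}a_k\,e^{it(\lambda_j-\lambda_k)}\langle f_j(g_t\theta),Af_k(g_t\theta)\rangle_{\mathcal H},
\]
a finite sum, with $g\ge 0$ and $A_f(t)=\int_\Omega g(t,\theta)\,d\mu(\theta)$ by invariance of $\mu$ under $g_t$. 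What must be shown is that $t\mapsto g(t,\theta)$ is bounded for $\mu$-a.e.\ $\theta$.

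The clean way to close this is to bound $|g(t,\theta)|\le\sum_{j,k}|a_j||a_k|\,|\langle f_j(g_t\theta),Af_k(g_t\theta)\rangle|$ and observe that each $\phi_{jk}(\omega)\doteq\langle f_j(\omega),Af_k(\omega)\rangle$ is bounded on the compact $\Omega$; then $g(\cdot,\theta)$ is bounded for \emph{every} $\theta$. This is exactly what happens if one knows $\theta\mapsto f_j(\theta)$ is continuous into $\dom A$ in the graph norm, since then $\phi_{jk}$ is continuous, hence bounded, on $\Omega$. I expect this to be the main obstacle: the bare pointwise hypothesis $f_j(\theta)\in\dom A$ only gives $\phi_{jj}(\cdot)=\|A^{1/2}f_j(\cdot)\|^2\in L^1(\Omega)$, and for an unbounded $A$ the supremum of such an $L^1$ density along the dense, measure-zero orbit $\{g_t\theta\}$ can be infinite for every $\theta$; moreover, passing from boundedness of the integral $A_f(t)=\int_\Omega g\,d\mu$ to a.e.-in-$\theta$ boundedness of $g(\cdot,\theta)$ is not automatic, since $\int_\Omega\sup_t g\,d\mu\ge\sup_t\int_\Omega g\,d\mu$ points the wrong way. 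I would therefore either strengthen the hypothesis to graph-norm continuity of the $f_j$ (yielding boundedness for every $\theta$), or, keeping the pointwise hypothesis, route the a.e.\ statement through the a.e.\ almost periodicity of $t\mapsto U_\theta(t,0)\xi$ furnished by Theorem~4.2 of \cite{JL}, which is the mechanism the remark preceding the statement implicitly relies on.
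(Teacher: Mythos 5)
Your handling of the first assertion is exactly the paper's own argument: Proposition~\ref{cor.45} is stated immediately after the computation expressing $A_f(t)=\sum_{n,m}\overline{a_n}a_m e^{-it(\lambda_m-\lambda_n)}B_{n,m}(A)$, and the paper's entire justification is that for $f=\sum_{j=1}^{m}a_jf_j$ this double sum is finite, hence absolutely convergent, so $A_f$ is a trigonometric polynomial, bounded and almost periodic. Your finite-sum reduction and the Cauchy--Schwarz bound on the coefficients $B_{j,k}(A)$ reproduce that faithfully.

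For the second assertion you have correctly identified that there is, in fact, nothing in the paper to compare against: the text merely asserts, in the remark preceding the statement, that absolute convergence of the sum makes $t\mapsto\langle U_{\theta}(t,0)f(\theta),AU_{\theta}(t,0)f(\theta)\rangle_{\mathcal{H}}$ bounded for a.e.\ $\theta$, and then says ``We conclude.'' Your diagnosis of why this does not follow from the stated hypotheses is sound: writing $U_\theta(t,0)f(\theta)=\sum_j a_je^{-i\lambda_jt}f_j(g_t\theta)$, pointwise boundedness requires $\sup_t\|A^{1/2}f_j(g_t\theta)\|<\infty$, and the hypotheses $f_j(\theta)\in\dom A$ for all $\theta$ and $B_{j,j}(A)<\infty$ only make $\phi_{jj}=\|A^{1/2}f_j(\cdot)\|^2$ finite pointwise and integrable; if $\phi_{jj}$ is essentially unbounded on $\Omega$, ergodicity forces a.e.\ orbit to enter $\{\phi_{jj}>M\}$ for every $M$, so the claimed conclusion would fail a.e.\ rather than hold, and as you note the inequality between $\sup_t\int_\Omega$ and $\int_\Omega\sup_t$ points the wrong way. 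Two remarks on your proposed repairs. The first (graph-norm continuity of $\theta\mapsto f_j(\theta)$ on the compact $\Omega$) does close the gap, and it is in effect what the paper itself resorts to immediately afterwards: Theorem~\ref{teo.47} and Corollary~\ref{cor.48} impose $C^1$ regularity on $\theta\mapsto f^{(\lambda)}(\theta)$ and on the flow precisely to obtain pointwise-in-$\theta$ bounded energy. The second repair, however, cannot work as stated: almost periodicity of $t\mapsto U_\theta(t,0)\xi$ in the $\mathcal{H}$-norm (Theorem 4.2 of \cite{JL}) gives precompactness of the orbit but does not bound the expectation of the unbounded operator $A$ along it---this is exactly the gap between $\mathcal{H}_{\mathrm{pc}}$ and $\mathcal{S}^{\mathrm{bd}}(A)$ in Theorem~\ref{GeneralTheo}(d), and the paper itself cites examples with purely point spectrum (hence almost periodic orbits) and unbounded energy. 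So under the bare hypotheses of Proposition~\ref{cor.45}, the second assertion must be regarded as unproved; your instinct to strengthen the hypotheses is the right resolution, not an optional one.
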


More generally we obtain the following result:

\

\begin{Theorem}\label{teo.47} Suppose that $\Omega$ is a compact
manifold, $g_t:\Omega\rightarrow\Omega$ a $C^1$ flow  with
$\sup_{t,\theta}\|\partial_tg_t(\theta)\|<\infty$, and
$\tilde{K}f^{(\lambda)}=\lambda f^{(\lambda)}$ with $\theta\mapsto
f^{(\lambda)}(\theta)$ a $C^1$ map. Then for $\mu$ almost every $\theta$ one has
$U_{\theta}(t,0)f^{(\lambda)}(\theta)\in\dom H_{\theta}(t)$ and
\[\left\langle U_{\theta}(t,0)f^{(\lambda)}(\theta),H_{\theta}(t)
U_{\theta}(t,0)f^{(\lambda)}(\theta)\right\rangle\] is a bounded
function of $t$. Moreover, if $H_{\theta}(t)=H_0+V(g_t(\theta))$
with $V(g_t(\theta))$ bounded and
$\sup_{t,\theta}\|V(g_t(\theta))\|<\infty$, then the energy expectation
 \[\left\langle
U_{\theta}(t,0)f^{(\lambda)}(\theta),H_0
U_{\theta}(t,0)f^{(\lambda)}(\theta)\right\rangle\] is also bounded.
\end{Theorem}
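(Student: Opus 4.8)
The plan is to reduce the whole statement to a single bounded continuous function on $\Omega$ evaluated along the flow. First I would extract from the eigenvalue equation the explicit form of the orbit. Since $\tilde{K}f^{(\lambda)}=\lambda f^{(\lambda)}$ gives $e^{-i\tilde{K}t}f^{(\lambda)}=e^{-i\lambda t}f^{(\lambda)}$, comparing with the defining relation $(e^{-i\tilde{K}t}f)_\theta=\mathcal{F}_{-t}U_\theta(t,0)f_\theta=U_{g_{-t}(\theta)}(t,0)f_{g_{-t}(\theta)}$ recalled in Section~\ref{IntroductionSection} yields, after replacing $\theta$ by $g_t(\theta)$, the key identity
\[
U_\theta(t,0)f^{(\lambda)}(\theta)=e^{-i\lambda t}\,f^{(\lambda)}(g_t(\theta)),\qquad t\in\R .
\]
Thus the orbit is, up to the unimodular phase $e^{-i\lambda t}$, just the eigenvector sampled along the flow line through $\theta$; this is the $\tilde{K}$-analogue of the relation used in Proposition~\ref{EigenAlmostProp} and extends Proposition~\ref{cor.45} from finite combinations to a single eigenvector.

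Next I would read off $H_\theta f^{(\lambda)}(\theta)$ from the eigenvalue equation. Since $(\tilde{K}f^{(\lambda)})_\theta=i\frac{d}{dt}f^{(\lambda)}(g_{-t}(\theta))\big|_{t=0}+H_\theta f^{(\lambda)}(\theta)=\lambda f^{(\lambda)}(\theta)$ for a.e.\ $\theta$, we have
\[
H_\theta f^{(\lambda)}(\theta)=\lambda f^{(\lambda)}(\theta)-i\frac{d}{dt}f^{(\lambda)}(g_{-t}(\theta))\Big|_{t=0}=:w(\theta).
\]
By the chain rule the drift term equals $-i\,Df^{(\lambda)}(\theta)\big[\partial_t g_t(\theta)|_{t=0}\big]$; because $\theta\mapsto f^{(\lambda)}(\theta)$ is $C^1$ and $g_t$ is a $C^1$ flow with $\sup_{t,\theta}\|\partial_t g_t(\theta)\|<\infty$, this is a continuous $\mathcal{H}$-valued function on the compact manifold $\Omega$, so $w$ is continuous and bounded. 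The domain assertion I would settle by invariance: the full-measure set $G=\{\theta:f^{(\lambda)}(\theta)\in\dom H_\theta\}$ is preserved by the flow, since the propagator $U_\theta(t,0)$ maps $\dom H_\theta(0)=\dom H_\theta$ into $\dom H_\theta(t)=\dom H_{g_t(\theta)}$ and the key identity then forces $f^{(\lambda)}(g_t(\theta))\in\dom H_{g_t(\theta)}$; hence for every $\theta\in G$ one has $U_\theta(t,0)f^{(\lambda)}(\theta)\in\dom H_\theta(t)$ for all $t$, which disposes of the a.e.-versus-everywhere subtlety in $t$.

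Finally I would insert the key identity into the energy expectation. The phase cancels and $H_\theta(t)=H_{g_t(\theta)}$, so
\[
\big\langle U_\theta(t,0)f^{(\lambda)}(\theta),H_\theta(t)U_\theta(t,0)f^{(\lambda)}(\theta)\big\rangle
=\big\langle f^{(\lambda)}(g_t(\theta)),w(g_t(\theta))\big\rangle=\Phi(g_t(\theta)),
\]
where $\Phi(\eta)=\langle f^{(\lambda)}(\eta),w(\eta)\rangle$ is continuous on the compact $\Omega$, hence bounded; therefore the energy is bounded uniformly in $t$ (and in $\theta$). For the last assertion I would write $H_0=H_\theta(t)-V(g_t(\theta))$ and estimate the extra term by $\sup_{t,\theta}\|V(g_t(\theta))\|\,\sup_\eta\|f^{(\lambda)}(\eta)\|^2<\infty$, exactly as in Proposition~\ref{prop.38}. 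The main obstacle is the middle step: one must ensure the drift term $i\,\frac{d}{dt}f^{(\lambda)}(g_{-t}(\theta))|_{t=0}$ is globally bounded, and it is precisely here that the $C^1$ regularity of $f^{(\lambda)}$, the uniform bound on $\partial_t g_t$, and the compactness of $\Omega$ all enter together.
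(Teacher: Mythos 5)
Your proposal is correct and takes essentially the same route as the paper's proof: both hinge on the identity $U_{\theta}(t,0)f^{(\lambda)}(\theta)=e^{-i\lambda t}f^{(\lambda)}(g_t(\theta))$ and bound the resulting derivative term using exactly the three hypotheses you single out ($C^1$ regularity of $f^{(\lambda)}$, the uniform bound on $\partial_t g_t$, and compactness of $\Omega$), with the second assertion handled as in Proposition~\ref{prop.38}. The only cosmetic difference is that you evaluate the generator $\tilde{K}$ pointwise and use the covariance $H_{\theta}(t)=H_{g_t(\theta)}$ to write the energy as a fixed continuous function sampled along the flow, whereas the paper differentiates the orbit in $t$ and identifies the result with $H_{\theta}(t)U_{\theta}(t,0)f^{(\lambda)}(\theta)$ via the Schr\"odinger equation.
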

\begin{proof} Since $\tilde{K}f^{(\lambda)}=\lambda f^{(\lambda)}$ then
$f^{(\lambda)}(\theta)\in\dom H_{\theta}(0)$ a.e.\ $\theta$ and
therefore $U_{\theta}(t,0)f^{(\lambda)}(\theta)\in\dom
H_{\theta}(t)$ a.e.\ $\theta$. On the other hand
\[U_{\theta}(t,0)f^{(\lambda)}(\theta)=
\mathcal{F}_te^{-i\tilde{K}t}f^{(\lambda)}(\theta)=
\mathcal{F}_te^{-i\lambda t}f^{(\lambda)}(\theta)=e^{-i\lambda
t}f^{(\lambda)}(g_t(\theta))\] and from the differentiability
hypothesis it follows that
\[i\frac{\partial}{\partial
t}U_{\theta}(t,0)f^{(\lambda)}(\theta)=\lambda e^{-i\lambda
t}f^{(\lambda)}(g_t(\theta))+ie^{-i\lambda
t}\frac{d}{d\theta}f^{(\lambda)}(g_t(\theta))\frac{d}{dt}g_t(\theta),\]
which implies that
\[
i\frac{\partial}{\partial
t}U_{\theta}(t,0)f^{(\lambda)}(\theta)=H_{\theta}(t)
U_{\theta}(t,0)f^{(\lambda)}(\theta)
\] is bounded and the first
part of the result is proved. The second one follows as in Proposition~\ref{prop.38}.
\end{proof}

\

\begin{Corollary}\label{cor.48} Suppose the hypotheses of the above theorem
hold  and that for each eigenvector
$f^{(\lambda_n)}\in\tilde{\mathcal{K}}$ the function
$\theta\mapsto f^{(\lambda_n)}(\theta)$ is $C^1$. Then for $\mu$ almost every $\theta$ and for all
vectors
$\xi\in\mathcal{H}$ of the form
\[\xi=a_1f^{(\lambda_1)}(\theta)+\ldots+a_kf^{(\lambda_k)}(\theta),\]
the expectation value of the energy
\[\left\langle
U_{\theta}(t,0)\xi,H_{\theta}(t) U_{\theta}(t,0)\xi\right\rangle\] is a
bounded function.
\end{Corollary}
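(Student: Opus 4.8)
The plan is to reduce the claim to the single–eigenvector case already settled in Theorem~\ref{teo.47}, by expanding the quadratic form $\langle U_\theta(t,0)\xi,H_\theta(t)U_\theta(t,0)\xi\rangle$ into a finite sum of cross terms and bounding each one separately. Writing $\psi_j(t)\doteq U_\theta(t,0)f^{(\lambda_j)}(\theta)$, the linearity of the propagator gives
\[
U_\theta(t,0)\xi=\sum_{j=1}^k a_j\,\psi_j(t),
\]
so that the energy expectation expands as
\[
\left\langle U_\theta(t,0)\xi,H_\theta(t)U_\theta(t,0)\xi\right\rangle
=\sum_{j,l=1}^k\overline{a_j}\,a_l\left\langle\psi_j(t),H_\theta(t)\psi_l(t)\right\rangle.
\]
Since this is a finite sum, it suffices to show that each cross term $E_{j,l}(t)\doteq\langle\psi_j(t),H_\theta(t)\psi_l(t)\rangle$ is a bounded function of~$t$.

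First I would fix a single full–measure set of good parameters: applying Theorem~\ref{teo.47} to each of the finitely many eigenvectors $f^{(\lambda_1)},\ldots,f^{(\lambda_k)}$ yields, for each $j$, a set $\Omega_j$ of full $\mu$–measure on which $\psi_j(t)\in\dom H_\theta(t)$ and
\[
H_\theta(t)\psi_j(t)=i\frac{\partial}{\partial t}\psi_j(t)
=\lambda_j e^{-i\lambda_j t}f^{(\lambda_j)}(g_t(\theta))
+i e^{-i\lambda_j t}\frac{d}{d\theta}f^{(\lambda_j)}(g_t(\theta))\frac{d}{dt}g_t(\theta).
\]
On the full–measure intersection $\Omega_0=\bigcap_{j=1}^k\Omega_j$ all these identities hold simultaneously, and this choice of $\Omega_0$ is independent of the coefficients $a_j$, which matches the quantifier structure of the statement (for a.e.\ $\theta$, then for \emph{all} $\xi$ of the given form). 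The crucial observation is that the proof of Theorem~\ref{teo.47} delivers more than boundedness of the diagonal expectation: it provides the explicit identity $H_\theta(t)\psi_j(t)=i\partial_t\psi_j(t)$ together with a uniform bound on $\|H_\theta(t)\psi_j(t)\|$, coming from continuity of $\theta\mapsto f^{(\lambda_j)}(\theta)$ on the compact $\Omega$, from the $C^1$ hypothesis (which bounds $\frac{d}{d\theta}f^{(\lambda_j)}$), and from the hypothesis $\sup_{t,\theta}\|\partial_tg_t(\theta)\|<\infty$.

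Then for $\theta\in\Omega_0$ each cross term is controlled by Cauchy–Schwarz,
\[
|E_{j,l}(t)|\le\|\psi_j(t)\|\,\|H_\theta(t)\psi_l(t)\|
=\|f^{(\lambda_j)}(g_t(\theta))\|\,\|i\partial_t\psi_l(t)\|,
\]
and both factors are bounded uniformly in $t$ by the estimates just recalled. Summing the finitely many bounded quantities $\overline{a_j}a_l E_{j,l}(t)$ then proves the claim, and the second statement on $E^0$ would follow as in Proposition~\ref{prop.38} under the additional boundedness hypothesis on $V$.

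I do not expect a genuine obstacle here, since the substantive analytic work---identifying $H_\theta(t)\psi_j(t)$ with $i\partial_t\psi_j(t)$ and extracting a uniform bound---is already carried out in Theorem~\ref{teo.47}. The only point requiring care is that the off–diagonal terms $E_{j,l}(t)$ with $j\ne l$ need the bound on $\|H_\theta(t)\psi_l(t)\|$ itself, not merely on the diagonal quadratic form $\langle\psi_l(t),H_\theta(t)\psi_l(t)\rangle$; this is why I would lean on the explicit expression for $i\partial_t\psi_l(t)$ rather than treat the boundedness statement of Theorem~\ref{teo.47} as a black box. Unlike the periodic analogue in Proposition~\ref{prop.41}, no almost periodicity of the cross terms is available or needed---boundedness suffices, and indeed the Example shows that almost periodicity can genuinely fail in the quasiperiodic regime.
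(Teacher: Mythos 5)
Your proof is correct and is essentially the paper's own argument: the paper states this corollary without a separate proof, as an immediate consequence of Theorem~\ref{teo.47}, the intended reasoning (made visible in the remark following the corollary on infinite sums) being that $U_{\theta}(t,0)\xi=\sum_{j}a_je^{-i\lambda_jt}f^{(\lambda_j)}(g_t(\theta))$ is $C^1$ in $t$ with $i\partial_tU_{\theta}(t,0)\xi=H_{\theta}(t)U_{\theta}(t,0)\xi$ uniformly bounded, whence the energy is bounded by Cauchy--Schwarz. Your cross-term expansion is just a term-by-term reorganization of that same estimate, and your observation that one should intersect the finitely many full-measure sets (so the exceptional set does not depend on the coefficients $a_j$) is a sound and worthwhile precision.
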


\

In case $\xi=\sum_{n=1}^{\infty}a_nf^{(\lambda_m)}(\theta)$
with $\sum|a_n|^2<\infty$, a sufficient condition
for $U_{\theta}(t,0)\xi\in\dom H_{\theta}(t)$ and bounded energy
is \[\sum_{j=1}^{\infty}|a_j|\left(|\lambda_j|+
\sup_{\theta}\|\partial_{\theta}f^{\lambda_j}(\theta)\|\right)<\infty,\]
since this implies that \[t\mapsto
U_{\theta}(t,0)\xi=\sum_{j=1}^{\infty}a_je^{-i\lambda_jt}f^{\lambda_j}(g_t(\theta))\]
is a $C^1$ function and $i\partial_tU_{\theta}(t,0)$ is bounded.

\newpage

\end{document}